\documentclass[10pt]{article}
\usepackage{enumitem}
 \usepackage{amsfonts}
\usepackage{amsmath}
\usepackage{graphicx}
\usepackage{subfigure}
\usepackage{fancyhdr}
\usepackage{amsthm}
\usepackage{amssymb}
\usepackage{amscd}
\usepackage{float}
      \textheight 9.5in
      \textwidth 6.5in
      \topmargin -15mm
      \oddsidemargin -5mm
      \evensidemargin -5mm
      \parskip 1mm
      \setcounter{page}{1}

\newtheorem{definition}{Definition}[section]

\newtheorem{lemma}{Lemma}[section]
\newtheorem{proposition}{Proposition}[section]
\newtheorem{remark}{Remark}[section]

\begin{document}
\title{{Variance swaps under L\'{e}vy process with stochastic volatility and stochastic interest rate in incomplete markets}\thanks{This work was supported by  the National Natural Science Foundation of China (11471230, 11671282).}}
\author{{Ben-zhang Yang$^a$, Jia Yue$^b$ and Nan-jing Huang$^a$\thanks{Corresponding author.  E-mail address: nanjinghuang@hotmail.com}}\\
{\small\it a. Department of Mathematics, Sichuan University, Chengdu, Sichuan 610064, P.R. China}\\
{\small\it b. Department of Economic Mathematics, South Western University of Finance and Economics,}\\
{\small\it Chengdu, Sichuan 610074, P.R. China}}
\date{}
\maketitle
\vspace*{-9mm}
\begin{center}
\begin{minipage}{5.5in}
{\bf Abstract.}
This paper focuses on the pricing of the variance swap in an incomplete market where the stochastic interest rate and the price of the stock are respectively driven by Cox-Ingersoll-Ross model and Heston model with simultaneous L\'{e}vy jumps. By using the equilibrium framework, we obtain the pricing kernel and the equivalent martingale measure. Moreover, under the forward measure instead of the risk neural measure, we give the closed-form solution for the fair delivery price of the discretely sampled variance swap by employing the joint moment generating function of the underlying processes.  Finally,  we provide some numerical examples to depict that the values of variance swaps not only depend on the stochastic interest rates but also increase in the presence of jump risks.
\\ \ \\
{\bf Keywords:} Finance; Variance swap; Stochastic volatility model with jump; Stochastic interest rate; Pricing and hedging.
\\ \ \\
\textbf{2010 AMS Subject Classification:}  91G20, 91G80, 60H10.

\end{minipage}
\end{center}
\section{Introduction}

In terms of the financial markets, volatility has always been considered as a key measure. The financial development and growth over last century has caused the role of volatility to change. Volatility derivatives generally are special financial tools which provide opportunities to display the financial market fluctuations and give methods to manage  volatility risks for investors (see, e.g., \cite{Bates,Bernard,Cao,Carr,Mon,Ruan,Windcliff,Zheng,Zhu}).  They are traded for decision-making between long or short positions, trading spreads between realized and implied volatility, and hedging against volatility risks. Of all volatility derivatives, variance swaps are written on underlying assets' historical volatility and they are related to previous standard deviation of financial returns involving a specified time period. Various theoretical results, numerical algorithms and applications have been studied extensively for variance swaps in the literature; for instance we refer the reader to \cite{Carr,Coqueret,Pun,Ruan,Shen,Zheng} and the references therein.

Along with the development of studies concerned with variance swaps, not only do investors in financial markets care for correct prices of variance swaps, but researchers in financial mathematics also attempt to construct practical models and provide feasible methods for pricing variance swaps under more weak assumptions. There are many researchers to focus on pricing variance swaps (or other volatility derivatives) by developing suitable market models for estimating values of variance swaps, in which three vital market factors are considered: the stochastic volatility, the jump diffusion and the stochastic interest rate. For example, the stochastic volatility of underlying assets is widely applied to avoid volatility smiles in financial markets, such as the Heston model (see, e.g., \cite{Fonseca,Heston}). The jump diffusion is extensively applied to describe non-Gaussian characters of assets' returns, such as the Geometric Variance Gamma model, the Merton model and the Geometric Stable Process Model (see, e.g., \cite{Huang,Madan}). The stochastic interest rate is often used to model the uncertainty of interest rate in the financial markets, such as the Cox-Ingersoll-Ross (for short CIR) model, the Hull-White (for short HW) model and the Heath-Jarrow-Morton (for short HJM) model (see, e.g., \cite{Brigo,Falini,Grzelak,Heath}).

Recently, based on the Heston stochastic volatility model, Grunbichler et al. \cite{Grunbichler} built a new pricing model for options on variance and showed the fundamental difference between the volatility derivative and the equity option.  By using the static replication of options, Carr and Madan \cite{Carr} studied the pricing and hedging of variance swaps without specifying the volatility process. Heston and Nandi \cite{Heston2} noted that specifying the mean reverting square root process has the disadvantage of unobservable underlying assets and then proposed a new model with the advantage hedging various volatility derivatives by using only a single asset. To distinguish the states of a business cycle, Elliott et al. \cite{Elliott1} constructed a continuous-time Markovian-regulated version of the Heston stochastic volatility model to price variance swaps, delivered analytical formulas by adopting the regime-switching Esscher transform, and showed that the valuation of volatility swaps based on the stochastic volatility model was significantly higher than one without the stochastic volatility model.

Incorporating the jump diffusion into models of pricing and hedging variance swaps, Carr et al. \cite{Carr2} and Huang et al. \cite{Huang} studied the pricing of variance swaps with time-changed L\'evy processes and found the fact that many small jumps cannot be adequately modelled by using finite-activity compound Poisson processes. Recently, Zheng and Kwok \cite{Zheng} presented a general analytic approach for pricing discretely sampled variance swaps under the stochastic volatility models with simultaneous jumps. Through analytic calculations, they proved that pricing formulas for the discretely sampled generalized variance swaps converge to ones for the continuously sampled variance swaps. Very recently, Cui et al. \cite {Cui} proposed a general framework for discretely sampled realized variance derivatives in stochastic volatility models with jumps. As pointed out by Cui et al. \cite {Cui},  the framework proposed in \cite {Cui} encompassed and extended some previous models  on discretely sampled volatility derivatives and provided highly efficient valuation methods.

On the other hand, taking into account the stochastic interest rate in financial markets, Kim et al. \cite{Kim} proposed a model by combining the multi-scale stochastic volatility model with the Hull-White interest rate model and showed that the values of variance swaps depend heavily on the variety of interest rates. Recently, by considering the effects of stochastic interest rates and the stochastic regime-switching volatility for pricing variance swaps, Shen et al. \cite{Shen} investigated the pricing model involving the stochastic interest rate and the stochastic regime-switching volatility. They demonstrated the effect of both the stochastic interest rate and the regime-switching is  significant in pricing variance swaps by providing a numerical analysis for the case with a two-state Markov chain. Very recently, Cao et al. \cite{Cao} and Roslan et al. \cite{Roslan} studied the effects of imposing stochastic interest rates driven by the CIR process along with the Heston stochastic volatility model for pricing variance swaps under the discrete sampling times. Some numerical results were given \cite{Cao, Roslan} to support the fact that the interest rates could impact and change the values of variance swaps.

Taken all together, we note that the values of variance swaps are dependent on the stochastic volatility, the jump diffusion and the stochastic interest rate. Thus, it would be important and interesting to consider the stated three factors in pricing variance swaps.  However, to the best of our knowledge, the work has not been reported in the literature for pricing variance swaps based on the stochastic volatility with simultaneous jumps and the stochastic interest rate. The main purpose of this paper is to make an attempt in this direction.  We construct a hybridization model for pricing variance swaps in financial markets, in which the stochastic interest rate is driven by the Cox-Ingersoll-Ross model and the volatility of the stock is described by the Heston model with simultaneous L\'{e}vy jumps. We obtain the closed-form solution for the fair delivery price of the discretely sampled variance swap via the analytical expression of the joint moment generating function of the underlying processes. We also give some numerical experiments to support the main results of this paper.

The rest of this paper is organized as follows. Section 2 gives some analytical formulas for pricing of variance swaps under the stochastic volatility model with simultaneous jumps and the stochastic interest rate in partial correlation case. In Section 3, we derive analytical formulas for pricing of variance swaps by solving the model given in full correlation case.  Some numerical examples for pricing variance swaps are reported in Section 4. Finally, conclusions are given in Section 5.

\section{Stochastic volatility model with jumps and CIR model: partial correlation case}

Let $(\Omega,\mathcal{F},\{\mathcal{F}_t\}_{t\geq 0},\mathbb{P})$ be an underlying filtered complete probability space with a physical probability measure $\mathbb{P}$. In this paper, we assume that the stochastic market interest rate is driven by the CIR model and the asset price is formulated by the stochastic volatility model with simultaneous jumps, respectively. More precisely, the price of the underlying asset $S(t)$, its instantaneous volatility $V(t)$ and the market interest rate $r(t)$ are governed by the following system of stochastic differential equations:
\begin{eqnarray}\label{real}
\left\{
\begin{array}{ll}
dS(t)=\mu S(t-)dt+\sqrt{V(t)}S(t-)dW_{1}(t)+\int_{R} (e^x-1)S(t-)\widetilde{N}_x(dt,dx),\quad S(0)=S_0>0,\\
dV(t)=\kappa(\theta-V(t))dt+\sigma\sqrt{V(t)}(\rho dW_{1}(t)+\sqrt{1-\rho^2} dW_{2}(t)),\quad V(0)=V_0>0,\\
dr(t)=\alpha(\beta-r(t))dt+\eta \sqrt{r(t)}dW_{3}(t),\quad r(0)=r_0>0,
\end{array}
\right.
\end{eqnarray}
where $R^d=\mathbb{R}^d\backslash\{\mathbf{0}\}$ for  $d>1$ and $R^1=(-\infty, +\infty)$;  $S(t-)$ stands for the value of $S(t)$ before a possible jump occurs; $W_i(t)$ $(i=1,2,3)$ are three independent Brownian motions; $\widetilde{N}_x(dt,dx)=N_x(dt,dx)-\nu_x(dx)dt$ is a compensated jump measure with respect to the jump measure $N_x(dt,dx)$ and the L\'evy kernel (density) $\nu_x(dx)$ satisfying $
\int_R \min\{1,x^2\}\nu_x(dx)<\infty$; the drift term $\mu>0$ is the expected return of the stock; the parameters $\kappa$, $\theta$ and $\sigma$ are the mean-reverting speed, the long-term mean and  the volatility of volatility (vol of vol) in the instantaneous volatility process $V(t)$, respectively; the parameters $\alpha$, $\beta$ and $\eta$  determine the speed of mean reversion, the long-term mean and the volatility of the interest rate in the stochastic instantaneous interest rate $r(t)$, respectively; $-1 \leq \rho \leq 1$ is the correlation coefficient between the stock price and the volatility.  Furthermore, in order to ensure that the square root processes are always positive, it is required that $2\kappa\theta\geq \sigma^2$ and $2\alpha\beta\geq\eta^2$ (see \cite{Heston}).    

Some special cases of \eqref{real} are as follows.
\begin{itemize}
\item[(i)] If interest rate process $r(t)$ is a constant, then system \eqref{real} reduces to the model considered by Ruan et al. \cite{Ruan}, Cui et al. \cite{Cui} and Zheng et al. \cite{Zheng}.
\item[(ii)] If there the jump diffusion is removed, then system \eqref{real} reduces to the model considered by Cao et al. \cite{Cao}, Roslan et al. \cite{Roslan} and Shen et al. \cite{Shen}.
\item[(iii)] If interest rate process $r(t)$ is a constant and there is no jump diffusion, then system \eqref{real} reduces to the model considered by Carr et al. \cite{Carr,Carr2} and Zhu et al. \cite{Zhu}.
\end{itemize}
\subsection{Pricing kernel}
We note that there are four uncertainties in the system driven by three Brownian motions and one jump process. It is well known that the pricing kernel is critical for determining the pricing and hedging of assets. In general, for any asset $p$ and its cash flow at time $T$, the pricing kernel $\pi$ should satisfy
$$p(t)=\mathbb{E}_t^{\mathbb{P}}\left[\frac{\pi(T)}{\pi(t)}p(T)\right],$$
where $\mathbb{E}_t^{\mathbb{P}}[\cdot]$ is the conditional expectation at time $t$ in the physical probability measure $\mathbb{P}$. It is also called the martingale condition which requires that the multiply $\pi(t)p(t)$ is a martingale. Thus, the pricing kernel should follow the restriction
$$p(t)\pi(t)=\mathbb{E}_t^{\mathbb{P}}\left[\pi(T)p(T)\right].$$

Due to the existence of the jump component, the market considered in this paper is incomplete and so there are infinitely many equivalent martingale measures for the asset pricing. Here we shall employ the idea of equilibrium pricing method to find an equivalent martingale measure and determine the corresponding risk-neural price processes of risk assets.  To this end,  we consider a money market account whose price M(t) with interest rate $r(t)$ as follows
$$\frac{dM(t)}{M(t)}=r(t)dt.$$

If we define the expected excess return of the stock (equity premium)
as $\phi :=\mu-r(t)$, then the expected return of the stock will be decomposed
into two parts: the interest rate and the equity premium. We assume that there is a representative investor with the portfolio $(u(t), 1-u(t))$ which means the fraction of wealth invested in the stock and the money market, respectively. The consumption rate of the investor is assumed to be $c(t)$. Then the investor's wealth process $\widetilde{W}(t)$ satisfies the following stochastic differential equations:
\begin{equation}
\left\{
\begin{aligned}
&\frac{d\widetilde{W}(t)}{\widetilde{W}(t)}=\left(r(t)+\phi u(t)-\frac{c(t)}{\widetilde{W}(t)}\right)dt+\sqrt{V(t)}dW_{1}(t)+\int_{R} (e^x-1)\widetilde{N}_x(dt,dx),\\
&dV(t)=\kappa(\theta-V(t))dt+\sigma\sqrt{V(t)}(\rho dW_{1}(t)+\sqrt{1-\rho^2} dW_{2}(t)),\\
& dr(t)=\alpha(\beta-r(t))dt+\eta \sqrt{r(t)}dW_{3}(t).
\end{aligned}
\right.
\end{equation}

Moreover, we assume that the representative investor has a CRRA utility
\begin{equation}\label{CRRA}
U(c)=\frac{c(t)^{1-\vartheta}}{1-\vartheta},
\end{equation}
where the relative risk aversion coefficient $\vartheta>0$ and $\vartheta\neq 1$. Choosing the portfolio $u(t)$ and the consumption rate $c(t)$,
the representative investor maximizes naturally his/her expected objective
function \eqref{CRRA} in an infinite horizon, that is,
\begin{align}\label{max}
&\max_{u,c}\quad E\left[\int_0^\infty e^{-\delta(s-t)}\frac{{c(s)}^{1-\vartheta}}{1-\vartheta}ds\right],\\
&s.t.\quad \left\{
\begin{aligned}\label{C}
&\frac{d\widetilde{W}(t)}{\widetilde{W}(t)}=\left(r(t)+\phi u(t)-\frac{c(t)}{\widetilde{W}(t)}\right)dt+\sqrt{V(t)}dW_{1}(t)+\int_{R} (e^x-1)\widetilde{N}_x(dt,dx),\\
&dV(t)=\kappa(\theta-V(t))dt+\sigma\sqrt{V(t)}(\rho dW_{1}(t)+\sqrt{1-\rho^2} dW_{2}(t)),\\
&dr(t)=\alpha(\beta-r(t))dt+\eta \sqrt{r(t)}dW_{3}(t),
\end{aligned}
\right.
\end{align}
where the time discount parameter $\delta>0$ is a constant. Based on the studies of \cite{Filipovica,Fu,Ruan}, the market equilibrium can be defined
in a standard way as follows.

\begin{definition}\label{equl} The market equilibrium occurs when the representative investor maximizes his/her expected objective function \eqref{max} and the market is cleared, that is, $u(t) = 1$.
\end{definition}

After solving the optimal portfolio-consumption problem \eqref{max} by
using the Hamilton-Jacobi-Bellman (HJB) method under the equilibrium condition, we get the following proposition.

\begin{proposition}\label{premium} In the production (the stock market) economy
with a representative investor who has CRRA utility and with a
production process governed by \eqref{real}, the
equilibrium equity premium $\phi$ is given by
\begin{equation}\label{kerneltest}
\phi=(\vartheta-\sigma \rho I)V(t)+\int_R(e^x-1)(1-e^{-\vartheta x})\nu_x(dx),
\end{equation}
where $I$, $K$ and $M$ are determined by the following equations
\begin{equation}\label{HJBO}
\left\{
\begin{aligned}
&\Gamma+\kappa\theta I+\alpha \beta K+\vartheta e^{-\frac{M+I \theta+K\beta}{\vartheta}}(1+\frac{\theta I}{\vartheta}+\frac{\beta K}{\vartheta})=0,\\
&\frac{1}{2}\vartheta(1-\vartheta)-\kappa I+\frac{1}{2}\sigma^2 I^2-Ie^{-\frac{M+I \theta+K\beta}{\vartheta}}=0,\\
&1-\vartheta-\alpha K+\frac{1}{2}\eta^2 K^2-Ke^{-\frac{M+I \theta+K\beta}{\vartheta}}=0
\end{aligned}
\right.
\end{equation}
with
$$\Gamma=-\delta-(1-\vartheta)\int_R\left(e^{(1-\vartheta)x}-e^{\vartheta x}\right)\nu_x(dx)+\int_R\left(e^{(1-\vartheta)x}-1\right)\nu_x(dx).$$
\end{proposition}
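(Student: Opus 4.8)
The plan is to solve the representative agent's optimisation \eqref{max}--\eqref{C} by dynamic programming and then extract the equilibrium premium from the first-order condition in the portfolio weight. First I would write the Hamilton--Jacobi--Bellman equation for the (current-value) value function $J(\widetilde W,V,r)$ of problem \eqref{max}: denoting by $\mathcal A^{u,c}$ the controlled generator of the triple $(\widetilde W,V,r)$ --- which collects the three diffusion terms, the $W_1$-correlation term between $\widetilde W$ and $V$ (coefficient proportional to $\sigma\rho V$), and the integro-differential jump part $\int_R [\,J(\widetilde W+u\widetilde W(e^x-1),V,r)-J-J_{\widetilde W}\,u\widetilde W(e^x-1)\,]\,\nu_x(dx)$ --- the HJB equation reads $\delta J=\sup_{u,c}\{\frac{c^{1-\vartheta}}{1-\vartheta}+\mathcal A^{u,c}J\}$, supplemented by a transversality/growth condition at infinity.

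Exploiting the homotheticity of the CRRA objective together with the affine structure of \eqref{real}, I would conjecture the separable ansatz $J(\widetilde W,V,r)=\frac{\widetilde W^{1-\vartheta}}{1-\vartheta}\,e^{M+IV+Kr}$ for constants $M,I,K$. The first-order condition in $c$ gives the optimal consumption $c^{*}=\widetilde W\,e^{-(M+IV+Kr)/\vartheta}$, that is, a consumption--wealth ratio $e^{-(M+IV+Kr)/\vartheta}$; the first-order condition in $u$, once the ansatz is inserted (so that $J_{\widetilde W\widetilde W}$ and $J_{\widetilde W V}$ are expressed through $J_{\widetilde W}$, $\vartheta$, $\widetilde W$ and $I$), reduces to $\phi-\vartheta uV+\sigma\rho IV+\int_R (e^x-1)[(1+u(e^x-1))^{-\vartheta}-1]\,\nu_x(dx)=0$. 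Imposing the market-clearing condition $u=1$ of Definition \ref{equl} turns $(1+u(e^x-1))^{-\vartheta}$ into $e^{-\vartheta x}$, and solving for $\phi$ yields exactly \eqref{kerneltest}.

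It then remains to substitute $J$, $c^{*}$, $u=1$ and the equilibrium $\phi$ back into the HJB equation, divide through by $J$, and match powers of the state. Every resulting term is affine in $(V,r)$ except the consumption contribution $\vartheta\,e^{-(M+IV+Kr)/\vartheta}$; evaluating this ratio at the long-run mean state $(\theta,\beta)$ --- which is precisely what produces the arguments $M+I\theta+K\beta$ appearing in \eqref{HJBO} --- closes the affine system, and the HJB equation then holds for all $(V,r)$ if and only if the coefficient of $V$, the coefficient of $r$ and the constant term each vanish. The first two give the last two lines of \eqref{HJBO} (note that the $\sigma\rho IV$ contributions coming from the $\widetilde W$--$V$ cross-variation and from $\phi$ cancel against each other), and the constant term, after packaging $-\delta$ together with the generator's jump integral $\int_R[e^{(1-\vartheta)x}-1-(1-\vartheta)(e^x-1)]\,\nu_x(dx)$ and the jump part of $(1-\vartheta)\phi$ into the abbreviation $\Gamma$, gives the first line of \eqref{HJBO}.

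I expect the main obstacle to be the bookkeeping of the jump terms: one must combine the compensated jump integral coming from the generator with the jump component of the equilibrium premium and simplify the sum into the compact form $\Gamma$, while keeping in mind that the individual integrals are finite only after the $O(x^2)$ cancellations near $x=0$. A secondary, more routine, point is the verification step --- concavity of the maximand in $(u,c)$, admissibility of the candidate controls, and the growth condition at infinity --- needed to upgrade the HJB solution to the genuine optimum and hence to a true market equilibrium.
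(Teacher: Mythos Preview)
Your approach is essentially the paper's: write the HJB equation, take first-order conditions in $(u,c)$, impose market clearing $u=1$, insert the separable ansatz $J=e^{M+IV+Kr}\widetilde W^{1-\vartheta}/(1-\vartheta)$, read off $\phi$ from the $u$-FOC, and then identify $M,I,K$ by reducing the HJB to an algebraic system.

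One point needs sharpening. Merely \emph{evaluating} the consumption term $\vartheta\,e^{-(M+IV+Kr)/\vartheta}$ at the steady state $(V,r)=(\theta,\beta)$ would turn it into a pure constant, so it would contribute only to the first line of \eqref{HJBO} and not to the $V$- and $r$-equations. To recover the terms $-Ie^{-(M+I\theta+K\beta)/\vartheta}$ and $-Ke^{-(M+I\theta+K\beta)/\vartheta}$ in the second and third lines, as well as the factor $(1+\theta I/\vartheta+\beta K/\vartheta)$ in the first, you must \emph{linearise} the exponential around $(\theta,\beta)$, i.e.\ use the first-order expansion
\[
\vartheta\,e^{-(M+IV+Kr)/\vartheta}\;\approx\;\vartheta\,e^{-(M+I\theta+K\beta)/\vartheta}\Bigl[1-\tfrac{I}{\vartheta}(V-\theta)-\tfrac{K}{\vartheta}(r-\beta)\Bigr],
\]
and then match the constant, $V$ and $r$ coefficients. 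The paper performs this substitution without comment, so your explicit flagging of the non-affine term is in fact more careful than the original; just make sure your write-up says ``linearise around'' rather than ``evaluate at'' the long-run mean.
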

\begin{proof}
Let
$$J(\widetilde{W},V,r)=\max_{u,c} E\left[\int_0^\infty e^{-\delta(s-t)}\frac{{c(s)}^{1-\vartheta}}{1-\vartheta}ds\right].$$
Then $J$ satisfies the following HJB equation
\begin{equation}\label{HJB}
\max_{u,c}\left\{-\delta J+\mathcal{L}J+\frac{c^{1-\vartheta}}{1-\vartheta}\right\}=0,
\end{equation}
where
\begin{equation*}
\begin{aligned}
\mathcal{L}J=&J_{\widetilde{W}}
\widetilde{W}\left(r+\phi u-\frac{c}{\widetilde{W}}\right)
+J_V\kappa(\theta-V)+\frac{1}{2}J_{VV}\sigma^2V\\
&+J_r\alpha(\beta-r)+\frac{1}{2}J_{rr}\eta^2r
+\frac{1}{2}J_{\widetilde{W}\widetilde{W}}\widetilde{W}^2Vu^2
+J_{\widetilde{W}V}\widetilde{W}V\rho\sigma u\\
&+\int_R\left[J(\widetilde{W}(1+(e^x-1)u,V,r)-J(\widetilde{W},V,r)-J_{ \widetilde{W}}\widetilde{W}(e^x-1)u)\right]\nu_x(dx).
\end{aligned}
\end{equation*}
This leads to the first-order condition for optimal problem \eqref{max} with constraints \eqref{C} as follows:
\begin{eqnarray}\label{condition}
\left\{
\begin{aligned}
&J_{\widetilde{W}}
\widetilde{W}\left(\phi+(e^x-1)u)\nu_x(dx)\right)+J_{\widetilde{W}\widetilde{W}}\widetilde{W}^2Vu
+J_{\widetilde{W}V}\widetilde{W}V\rho\sigma\\
&\quad \mbox{}+\int_R\left[J_{\widetilde{W}}(\widetilde{W}(1+(e^x-1)u,V,r)(e^x-1)u)\widetilde{W}\right]\nu_x(dx)=0,\\
&-J_{\widetilde{W}}+c^{-\vartheta}=0.
\end{aligned}
\right.
\end{eqnarray}
Following the equilibrium condition in Definition \ref{equl} and taking $u=1$, we have
\begin{equation}\label{kernelformula}
\begin{aligned}
\phi= -\frac{1}{J_{\widetilde{W}}}\bigg(J_{\widetilde{W}\widetilde{W}}\widetilde{W}V
+J_{\widetilde{W}V}\widetilde{W}V\rho\sigma
+\int_RJ_{\widetilde{W}}(\widetilde{W}e^x,V,r)(e^x-1))\nu_x(dx)\bigg)
+\int_R(e^x-1)\nu_x(dx).
\end{aligned}
\end{equation}
Substituting \eqref{condition} and \eqref{kernelformula} into \eqref{HJB}, we get the following partial differential condition
\begin{equation}\label{HJBP}
\begin{aligned}
0=&J_{\widetilde{W}}
\widetilde{W}r\widetilde{W}
+J_V\kappa(\theta-V)+\frac{1}{2}J_{VV}\sigma^2V
+J_r\alpha(\beta-r)+\frac{1}{2}J_{rr}\eta^2r\\
&-\frac{1}{2}J_{\widetilde{W}\widetilde{W}}\widetilde{W}^2V
+\int_R\left[J(\widetilde{W}e^x,V,r)-J(\widetilde{W},V,r)\right]\nu_x(dx)\\
&-\widetilde{W}\int_RJ_{ \widetilde{W}}(\widetilde{W}e^x,V,r)(e^x-1)\nu_x(dx)+\frac{\vartheta}{1-\vartheta}J_{\widetilde{W}}^{1-\frac{1}{\vartheta}}-\vartheta J.
\end{aligned}
\end{equation}
Suppose that the value function has the following form:
\begin{equation}\label{Jform}
J(\widetilde{W},V,r)=e^{M+I V+Kr}\frac{{\widetilde{W}}^{1-\vartheta}}{1-\vartheta}.
\end{equation}
Then, by substituting \eqref{Jform} into \eqref{HJBP},  one has
\begin{equation}\label{HJBO}
\left\{
\begin{aligned}
&\Gamma+\kappa\theta I+\alpha \beta K+\vartheta e^{-\frac{M+I \theta+K\beta}{\vartheta}}(1+\frac{\theta I}{\vartheta}+\frac{\beta K}{\vartheta})=0,\\
&\frac{1}{2}\vartheta(1-\vartheta)-\kappa I+\frac{1}{2}\sigma^2 I^2-Ie^{-\frac{M+I \theta+K\beta}{\vartheta}}=0,\\
&1-\vartheta-\alpha K+\frac{1}{2}\eta^2 K^2-Ke^{-\frac{M+I \theta+K\beta}{\vartheta}}=0
\end{aligned}
\right.
\end{equation}
with
$$\Gamma=-\delta-(1-\vartheta)\int_R\left(e^{(1-\vartheta)x}-e^{\vartheta x}\right)\nu_x(dx)+\int_R\left(e^{(1-\vartheta)x}-1\right)\nu_x(dx).$$
Combining \eqref{Jform} and \eqref{HJBO} with \eqref{kernelformula}, the results follow immediately.
\end{proof}

From Proposition \ref{premium}, we know that risk premiums are contributed by three Brownian diffusion risks and one jump diffusion risk with small jump sizes. Therefore, the pricing kernel is related to four risk sources and so we can define the pricing kernel $\pi(t)$ as follows:
\begin{equation}\label{pi}
\frac{d \pi(t)}{\pi(t)}=-r(t)dt-\gamma_1(t)dW_1(t)-\gamma_2(t)dW_2(t)-\gamma_3(t)dW_3(t)+\int_R (e^z-1)\widetilde{N}_z(dt,dz),
\end{equation}
where $\gamma_i(t)$ is the market price of the $i$-th diffusion risk (risk premium) from $W_i(t)$  and the new compensated jump measure of $\pi(t)$ is formulated by
\begin{eqnarray}\label{e+1}
\widetilde{N}_z(dt,dz)={N}_z(dt,dz)-\nu_z(dz)dt
\end{eqnarray}
here $N_z(dt,dz)$ is a new jump measure and $\nu_z(dz)$ is a new L\'{e}vy kernel for $z$. Based on the analysis of Proposition \ref{premium}, we know that the risk premium contributed by the Brownian motion $W_1(t)$ is $(\vartheta-\sigma \rho I)V(t)$. Then the market price of the first diffusion risk from $W_1(t)$ is given by
$$\gamma_1(t)=\frac{(\vartheta-\sigma \rho I)V(t)}{\sqrt{V(t)}}=(\vartheta-\sigma \rho I)\sqrt{V(t)}.$$
Moreover, making use of the martingale condition of pricing kernel $\pi(t)$, we can get the pricing kernel under the production economy by the following proposition.

\begin{proposition}\label{lemma1}
In the production economy, the pricing kernel $\pi(t)$ for (\ref{real}) satisfies the following equation
\begin{equation}\label{pi2}
\frac{d \pi(t)}{\pi(t)}=-r(t)dt-(\vartheta-\sigma \rho I)\sqrt{V(t)}dW_1(t)-\gamma_2(t)dW_2(t)-\gamma_3(t)dW_3(t)+\int_R (e^z-1)\widetilde{N}_z(dt,dz),
\end{equation}
where $\widetilde{N}_z(dt,dz)$ is given by \eqref{e+1} such that
\begin{eqnarray}\label{cons}
\int_{R}(e^x-1)e^{-\vartheta x}\nu_{x}(dx) =
\int_{R^2} e^z(e^x-1)\nu_{z,x}(dz\times dx).
\end{eqnarray}
\end{proposition}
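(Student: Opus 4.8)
The plan is to pin down $\pi(t)$ by imposing its defining property — that $\pi(t)p(t)$ is a $\mathbb{P}$-martingale for every traded asset $p$ — for the money-market account $M(t)$ and for the stock $S(t)$. For $M(t)$ this is automatic: $M(t)$ carries no diffusion or jump part, so $[\pi,M]\equiv 0$ and $d(\pi M)/(\pi M)=d\pi/\pi+r(t)\,dt$ is a local martingale exactly because of the $-r(t)\,dt$ drift already built into the ansatz \eqref{pi}. Hence the whole content of the proposition comes from requiring $\pi(t)S(t)$ to be a $\mathbb{P}$-martingale. I will start from \eqref{pi} and invoke Proposition \ref{premium}: the premium that $W_1$ contributes to the stock is $(\vartheta-\sigma\rho I)V(t)$, and since $W_1$ enters $S$ with loading $\sqrt{V(t)}$, the market price of the first diffusion risk must be $\gamma_1(t)=(\vartheta-\sigma\rho I)\sqrt{V(t)}$, which is the coefficient displayed in \eqref{pi2}; the prices $\gamma_2(t),\gamma_3(t)$ of the volatility and interest-rate risks are not pinned down by the martingale condition on $S$ and are kept general in \eqref{pi2} (they could be read off from the equilibrium value function if desired).

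Next I will apply It\^o's product rule to $\pi(t)S(t)$, the one delicate point being the jump part. Because the jumps of $\pi$ and $S$ are simultaneous, a jump of relative size $e^z-1$ in $\pi$ coincides with one of relative size $e^x-1$ in $S$, so $\pi S$ jumps by the factor $e^{z+x}-1$, with $(z,x)$ carrying a joint $\mathbb{P}$-L\'evy measure $\nu_{z,x}(dz\times dx)$ whose $x$-marginal is $\nu_x(dx)$ and whose $z$-marginal is $\nu_z(dz)$. Collecting the finite-variation part of $d(\pi S)/(\pi S)$: the linear drifts give $(\mu-r(t))\,dt$, the continuous cross-variation of $d\pi/\pi$ and $dS/S$ (only $W_1$ is shared) gives $-\gamma_1(t)\sqrt{V(t)}\,dt$, and the compensators of the jump integrals combine, after rewriting the $z$- and $x$-marginals through $\nu_{z,x}$, into
\begin{equation*}
\int_{R^2}\bigl[(e^{z+x}-1)-(e^z-1)-(e^x-1)\bigr]\,\nu_{z,x}(dz\times dx)=\int_{R^2}(e^z-1)(e^x-1)\,\nu_{z,x}(dz\times dx).
\end{equation*}
Setting the total drift of $\pi S$ to zero then gives
\begin{equation*}
\phi=\mu-r(t)=(\vartheta-\sigma\rho I)V(t)-\int_{R^2}(e^z-1)(e^x-1)\,\nu_{z,x}(dz\times dx).
\end{equation*}

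Finally I will match this against the equilibrium equity premium of Proposition \ref{premium}, $\phi=(\vartheta-\sigma\rho I)V(t)+\int_R(e^x-1)(1-e^{-\vartheta x})\nu_x(dx)$: the $V(t)$-proportional parts coincide identically (the consistency check that fixes $\gamma_1$), and equating the remaining $V$-free jump parts yields
\begin{equation*}
-\int_{R^2}(e^z-1)(e^x-1)\,\nu_{z,x}(dz\times dx)=\int_R(e^x-1)(1-e^{-\vartheta x})\,\nu_x(dx).
\end{equation*}
Writing $(e^z-1)(e^x-1)=e^z(e^x-1)-(e^x-1)$ on the left, $(e^x-1)(1-e^{-\vartheta x})=(e^x-1)-e^{-\vartheta x}(e^x-1)$ on the right, and using that the $x$-marginal of $\nu_{z,x}$ is $\nu_x$, so $\int_{R^2}(e^x-1)\,\nu_{z,x}(dz\times dx)=\int_R(e^x-1)\,\nu_x(dx)$, the two copies of $\int_R(e^x-1)\nu_x(dx)$ cancel and exactly the identity \eqref{cons} remains. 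I expect the main obstacle to be the jump accounting in the second step — the multiplicative jump of the product $\pi S$ and the systematic conversion between the joint L\'evy measure $\nu_{z,x}$ and its marginals $\nu_x,\nu_z$ — together with the integrability checks making the compensated jump integrals genuine martingales; the remaining algebra is routine. As a cross-check one could instead use the equilibrium identity $\pi(t)=e^{-\delta t}c(t)^{-\vartheta}=e^{-\delta t+M+IV(t)+Kr(t)}\widetilde{W}(t)^{-\vartheta}$, coming from the first-order condition $c^{-\vartheta}=J_{\widetilde{W}}$ and the form \eqref{Jform} of $J$, apply It\^o's formula and read off the coefficients directly; in that form one sees that $\pi$ jumps by $e^{-\vartheta x}-1$ whenever $S$ jumps by $e^x-1$, i.e.\ $\nu_{z,x}$ is carried by $\{z=-\vartheta x\}$, which makes \eqref{cons} transparent.
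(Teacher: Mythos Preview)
Your argument is correct. The paper itself does not give a proof here: it simply states that the proof is similar to Proposition~5 in \cite{Fu} and omits it. Your route---imposing the $\mathbb{P}$-martingale condition on $\pi(t)S(t)$, computing the drift of $d(\pi S)/(\pi S)$ via It\^o's product rule with the joint jump measure $\nu_{z,x}$, and then matching against the equilibrium premium \eqref{kerneltest} to isolate the constraint \eqref{cons}---is exactly the standard derivation that \cite{Fu} carries out in the constant-volatility setting, so your proposal is in line with what the paper defers to. The algebraic identity $(e^{z+x}-1)-(e^z-1)-(e^x-1)=(e^z-1)(e^x-1)$ and the marginal cancellation $\int_{R^2}(e^x-1)\nu_{z,x}=\int_R(e^x-1)\nu_x$ are handled correctly. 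Your closing cross-check, reading $\pi(t)$ directly from $e^{-\delta t}J_{\widetilde W}$ and observing that $\nu_{z,x}$ is concentrated on $\{z=-\vartheta x\}$, is a useful sanity check that goes slightly beyond what the paper records.
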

\begin{proof}
The proof is similar to the one of Proposition 5 in \cite{Fu} and so we omit it here.
\end{proof}

\begin{remark}
We note that the risk premium $\gamma_2(t)$ with respect to the volatility $V(t)$ can not be obtained by applying the martingale condition to Brownian motion $W_2(t)$. In fact, the risk premium $\gamma_2(t)$ depends not only on the volatility $V(t)$ but also on the price of stock $S(t)$. The same holds true for the risk premium $\gamma_3(t)$ with respect to the interest rate $r(t)$. However, the distribution of the L\'evy kernel $v_{x,z}$ in (\ref{pi2}) can be arbitrary, as long as it satisfies (\ref{cons}) (see, for example, \cite{Carr,Carr2,Madan}).
\end{remark}

Generally, in order to price derivatives, we attempt to look for a so-called equivalent martingale measure $\mathbb{Q}$ such that $\mathbb{Q}\sim \mathbb{P}$. It is not easy to determine $\gamma_2(t)$, $\gamma_3(t)$ and the distribution of jump process in the physical measure $\mathbb{P}$, so we need to get the particular forms of the stochastic processes in a risk-neutral measure $\mathbb{Q}$. Define a Radon-Nikod\'{y}m derivative as follows
\begin{equation}\label{rn}
Z(t):=\frac{d\mathbb{Q}}{d\mathbb{P}}:=e^{\int_0^t r(s)ds}\pi(t).
\end{equation}
Then the asset pricing formula can be rewritten by following lemma under the risk-neutral measure $\mathbb{Q}$.
\begin{lemma}\label{risk neutral pricing}
Under the risk-neutral measure $\mathbb{Q}$, the asset pricing formula is given by
$$p(t)=\mathbb{E}_t^{\mathbb{Q}}\left[e^{-\int_t^T r(s)ds }p(T)\right].$$
\begin{proof}
By the definition of pricing kernel, we have
$$e^{-\int_0^t r(s)ds}p(t)=e^{-\int_0^t r(s)ds} \mathbb{E}_t^{\mathbb{P}}\left[\frac{\pi(T)}{\pi(t)}p(T)\right]=\mathbb{E}_t^{\mathbb{P}}\left[e^{-\int_0^T r(s)ds}\frac{Z(T)}{Z(t)}p(T)\right]=\mathbb{E}_t^{\mathbb{Q}}\left[e^{-\int_0^T r(s)ds}p(T)\right],
$$
which just proves our result since $e^{-\int_0^t r(s)ds}$ is $\mathcal{F}_t$-measurable.
\end{proof}
\end{lemma}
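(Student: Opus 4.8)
The plan is to combine the martingale property of the pricing kernel $\pi$ with the change-of-measure identity carried by the Radon--Nikod\'{y}m derivative \eqref{rn}. First I would record the two equivalent forms of the pricing-kernel relation already established, namely $\pi(t)p(t)=\mathbb{E}_t^{\mathbb{P}}\left[\pi(T)p(T)\right]$, and hence $p(t)=\mathbb{E}_t^{\mathbb{P}}\left[\frac{\pi(T)}{\pi(t)}p(T)\right]$, valid for any admissible asset with terminal cash flow $p(T)$. Then I would invert \eqref{rn} to write $\pi(t)=e^{-\int_0^t r(s)ds}Z(t)$ and substitute this into the expectation, so that the discount factors $e^{-\int_0^t r(s)ds}$ and $e^{-\int_0^T r(s)ds}$ appear explicitly alongside the ratio $Z(T)/Z(t)$; multiplying both sides by $e^{-\int_0^t r(s)ds}$ gives $e^{-\int_0^t r(s)ds}p(t)=\mathbb{E}_t^{\mathbb{P}}\left[e^{-\int_0^T r(s)ds}\frac{Z(T)}{Z(t)}p(T)\right]$.

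The second step is the change of measure. Since $Z$ is, by construction, a strictly positive $\mathbb{P}$-martingale with $\mathbb{E}^{\mathbb{P}}[Z(T)]=1$ and $\mathbb{Q}$ is defined on $\mathcal{F}_T$ by $d\mathbb{Q}/d\mathbb{P}=Z(T)$, the abstract Bayes rule yields $\mathbb{E}_t^{\mathbb{P}}\left[\frac{Z(T)}{Z(t)}X\right]=\mathbb{E}_t^{\mathbb{Q}}[X]$ for every $\mathcal{F}_T$-measurable integrable $X$. Applying this with $X=e^{-\int_0^T r(s)ds}p(T)$ turns the previous display into $e^{-\int_0^t r(s)ds}p(t)=\mathbb{E}_t^{\mathbb{Q}}\left[e^{-\int_0^T r(s)ds}p(T)\right]$. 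Finally, since $e^{-\int_0^t r(s)ds}$ is $\mathcal{F}_t$-measurable and strictly positive, I would pull it inside the conditional expectation on the right-hand side and cancel, which gives $p(t)=\mathbb{E}_t^{\mathbb{Q}}\left[e^{-\int_t^T r(s)ds}p(T)\right]$, the desired identity.

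The only genuinely delicate point --- and the step I expect to be the main obstacle --- is the justification that $Z$ is a true $\mathbb{P}$-martingale (not merely a positive local martingale) with $\mathbb{E}^{\mathbb{P}}[Z(T)]=1$, so that $\mathbb{Q}$ is a bona fide probability measure equivalent to $\mathbb{P}$ and the Bayes rule above is legitimate. From \eqref{pi2} together with \eqref{rn} one sees that $Z$ solves a stochastic exponential equation driven by $W_1,W_2,W_3$ with risk-premium coefficients $(\vartheta-\sigma\rho I)\sqrt{V(t)}$, $\gamma_2(t)$, $\gamma_3(t)$ and by the compensated jump term $\int_R(e^z-1)\widetilde{N}_z(dt,dz)$; establishing its martingale property therefore calls for a Novikov- or Kazamaki-type integrability condition on these market prices of risk (for the square-root factors $V$ and $r$ this is classical under the Feller conditions $2\kappa\theta\geq\sigma^2$ and $2\alpha\beta\geq\eta^2$ already imposed on the model \eqref{real}) together with an exponential-moment control on the L\'{e}vy kernel for the jump part. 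Granting this --- which is implicitly assumed the moment \eqref{rn} is introduced --- everything else reduces to the bookkeeping of discount factors described above, and it is precisely the $\mathcal{F}_t$-measurability of $e^{-\int_0^t r(s)ds}$ that makes the final cancellation valid.
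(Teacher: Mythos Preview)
Your proof is correct and follows essentially the same route as the paper: start from the pricing-kernel identity, substitute $\pi(t)=e^{-\int_0^t r(s)ds}Z(t)$ via \eqref{rn}, apply the Bayes rule to pass from $\mathbb{P}$ to $\mathbb{Q}$, and then cancel the $\mathcal{F}_t$-measurable factor $e^{-\int_0^t r(s)ds}$. The paper's proof is the same one-line chain of equalities but leaves implicit both the Bayes rule and the true-martingale property of $Z$ that you rightly flag as the only substantive hypothesis.
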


From Proposition \ref{lemma1}, Lemma \ref{risk neutral pricing} and Girsanov Theorem (see \cite{Mon,BO,Ruan,Shen}),  we know that the stochastic system (\ref{real})  can be changed from $\mathbb{P}$ into $\mathbb{Q}$. Thus, Brownian motions, the compensated jump measure and the parameters in (\ref{real}) can be transformed from $\mathbb{P}$ into $\mathbb{Q}$. In order to ensure the mean-reverting speed of the volatility and the interest rate in (\ref{real}) under $\mathbb{Q}$ are both constants, we employ Heston's assumptions on the parameters of mean-reverting process.

As illustrated in \cite{Heston}, Heston applied Breeden's consumption-based model to yield a volatility risk premium having form $\lambda(t,S(t),\nu(t))=\lambda \nu(t)$ for the CIR square-root process $\nu(t)$. Inspired by Heston's idea, in this paper, we assume that the risk premiums of volatility $V(t)$ and interest rate $r(t)$ as follows
$$\lambda_1(t,S(t),V(t),r(t))=\lambda_1 V(t), \quad \lambda_2(t,S(t),V(t),r(t))=\lambda_2 r(t),$$
where $\lambda_1$ and $\lambda_2$ are two constants. Then it follows from \cite{Heston} that parameters in the pricing kernel $\pi$ can be given by
$\gamma_2(t)=\lambda_1\sqrt{V(t)}/\sigma$ and $\gamma_3(t)=\lambda_2\sqrt{r(t)}/\eta$, respectively. To obtain the new form of \eqref{real} under $\mathbb{Q}$, we consider transformations given by
\begin{eqnarray*}
\begin{aligned}
dW_1^{\mathbb{Q}}(t)=dW_1(t)+\gamma_1(t)dt=dW_1(t)+(\vartheta-\sigma \rho I)\sqrt{V(t)}dt,\\
dW_2^{\mathbb{Q}}(t)=dW_2(t)+\gamma_2(t)dt=dW_2(t)+\lambda_1\sqrt{V(t)}/\sigma dt
\end{aligned}
\end{eqnarray*}
and
$$dW_3^{\mathbb{Q}}(t)=dW_3(t)+\gamma_3(t)dt=dW_3(t)+\lambda_2\sqrt{V(t)}/\sigma dt.$$

Consequently, the stock price, the volatility process and the interest rate process at time $t$ under $\mathbb{Q}$ can be rewritten as follows
\begin{equation}\label{neutral}
\left\{
\begin{array}{ll}
dS(t)=r(t)S(t-)dt+\sqrt{V(t)}S(t-)dW^{\mathbb{Q}}_{1}(t)+\int_R (e^x-1)S(t-)\widetilde{N}^{\mathbb{Q}}_x(dt,dx),\quad S(0)=S_0>0,\\
dV(t)=\kappa^{\mathbb{Q}}(\theta^{\mathbb{Q}}-V(t))dt+\sigma\sqrt{V(t)}\left(\rho dW^{\mathbb{Q}}_{1}(t)+\sqrt{1-\rho^2} dW^{\mathbb{Q}}_{2}(t)\right),\quad V(0)=V_0>0,\\
dr(t)=\alpha^{\mathbb{Q}}(\beta^{\mathbb{Q}}-r(t))dt+\eta \sqrt{r(t)}dW^{\mathbb{Q}}_{3}(t),\quad r(0)=r_0>0,
\end{array}
\right.
\end{equation}
where $dW^{\mathbb{Q}}_{i}(t)=dW_{i}(t)+\gamma_{i}(t)dt$ is a Brownian motion under $\mathbb{Q}$ for $i=1,2,3$,
$$
\widetilde{N}^{\mathbb{Q}}_x(dt,dx)=N_x(dt,dx)-v_x^\mathbb{Q}(dx)dt, \quad  v_x^\mathbb{Q}(dx)dt=\int_R e^zv_{x,z}(dx\times dz)dt,
$$
and
\begin{equation*}
\kappa^{\mathbb{Q}}=\kappa+\rho\sigma(\vartheta-\sigma \rho I)+\lambda_1\sqrt{1-\rho^2},\quad \theta^{\mathbb{Q}}=\frac{\kappa \theta}{\kappa^{\mathbb{Q}}},\quad \alpha^{\mathbb{Q}}=\alpha+\lambda_{2}, \quad \beta^{\mathbb{Q}}=\frac{\alpha \beta}{\alpha+\lambda_{2}}
\end{equation*}
are four risk-neutral parameters.
\begin{remark}\label{remarkeee}
Clearly,  $\tilde{N}_x$ and $\tilde{N}^{\mathbb{Q}}_x$ appeared  in (\ref{real}) and (\ref{neutral}) should satisfy the following conditions
$$
\int_R (e^x-1)v_x(dx)<+\infty,\quad \int_R (e^x-1)v_x^\mathbb{Q}(dx)<+\infty.
$$
\end{remark}

\subsection{Pricing formula for variance swaps }

We recall that a variance swap is a forward contract on the future realized variance (for short RV) of returns for a specified asset with a maturity $T>0$ and a constant strike level $K>0$. At the maturity time $T$, the payoff of a variance swap can be evaluated as
$V(T)=(RV-K)\times L,$ where $L$ is the notional amount of the swap in dollars and RV is the sum of squared returns of asset. In the risk-neutral world, the value of a variance swap is the conditional expectation of its future payoff with respect to $\mathbb{Q}$ defined by
$$V(t)=\mathbb{E}^{\mathbb{Q}}\left[e^{-\int_t^T r(s)ds }(RV-K)\times L\big| \mathcal{F}_t\right].$$
Since it is defined in the class of forward contracts, we know that $V(0)=0$. The calculation of above expectation is difficult due to it involves the joint distribution of the interest rate and the future payoff. Noticing that the price of a $T$-maturity zero-coupon bond at $t=0$ is given by
$$P(0,T)=\mathbb{E}^{\mathbb{Q}}\left[e^{-\int_0^T r(s)ds }| \mathcal{F}_0\right],$$
we can consider the pricing problem under the $T$-forward measure $\mathbb{Q}^{T}$ instead of the risk-neutral measure $\mathbb{Q}$ and so
$$V(0)=\mathbb{E}^{\mathbb{Q}}\left[e^{-\int_0^T r(s)ds }(RV-K)\times L\big| \mathcal{F}_0\right]=P(0,T)\mathbb{E}^{T}\left[(RV-K)\times L\big| \mathcal{F}_0\right],$$
where $\mathbb{E}^{T}\left[\cdot|\mathcal{F}_0\right]$ denotes the expectation under $\mathbb{Q}^{T}$ with respect to $\mathcal{F}_0$ at $t=0$.
Thus, the fair delivery strike price of a variance swap is given by
$$K_{V}=\mathbb{E}^{T}[RV| \mathcal{F}_0].$$

To calculate $K_{V}$, we study the system \eqref{neutral} under the $T$-forward measure $\mathbb{Q}^{T}$. By applying the term structure theory of interest rate (see, e.g., \cite{Brigo}), we get
$$
N_{1,t}=e^{\int_{0}^{t}r(s)ds}
$$
under the measure $\mathbb{Q}$ and
$$
N_{2,t}=P(t,T)=A(t,T)e^{-B(t,T)r(t)}
$$
under the measure $\mathbb{Q}^{T}$, where
$$A(t,T)=\left[\frac{2e^{\alpha^{Q}+\sqrt{(\alpha^{Q})^{2}+2\eta^{2}}\frac{T-t}{2}}\sqrt{(\alpha^{Q})^{2}+2\eta^{2}}}
{2\sqrt{(\alpha^{Q})^{2}+2\eta^{2}}+\left(\alpha^{Q}+\sqrt{(\alpha^{Q})^{2}+2\eta^{2}}\right)\left(e^{(T-t)\sqrt{(\alpha^{Q})^{2}+2\eta^{2}}}-1\right)}\right]^{\frac{2\alpha^{Q}\beta^{Q} }{\eta^{2}}},$$
and
$$B(t,T)=\frac{e^{(T-t)\sqrt{(\alpha^{Q})^{2}+2\eta^{2}}}-1}{2\sqrt{(\alpha^{Q})^{2}+2\eta^{2}}+\left(\alpha^{Q}+\sqrt{(\alpha^{Q})^{2}+2\eta^{2}}\right)
\left(e^{(T-t)\sqrt{(\alpha^{Q})^{2}+2\eta^{2}}}-1\right)}.$$
Using It\^{o} Lemma, we have
$$d\ln N _{1,t}=r(t)dt=\left(\int_0^t \alpha^{\mathbb{Q}}(\beta^{\mathbb{Q}}-r(s))ds+\eta \sqrt{r(s)}dW^{\mathbb{Q}}_{3}(s)\right)dt$$
and
$$d\ln N _{2,t}=\left[\frac{A'(t,T)}{A(t,T)}-B'(t,T)r(t)-B(t,T)\alpha^{\mathbb{Q}}(\beta^{\mathbb{Q}}-r(t))\right]dt-B(t,T)\eta\sqrt{r(t)}dW_3^Q(t),$$
where $A'(t,T)$ and $B'(t,T)$ are the partial derivatives with respect to $t$.

From the above discussion, we have the following result.
\begin{lemma}\label{l+}
System (\ref{neutral}) can be transformed into the following system under $\mathbb{Q}^{T}$:
\begin{equation}\label{forward}
\left\{
\begin{aligned}
dS(t)&=r(t)S(t-)dt+\sqrt{V(t)}S(t-)dW^{*}_{1}(t)+\int_R (e^x-1)S(t-)\widetilde{N}^{*}_x(dt,dx),\quad S(0)=S_0>0,\\
dV(t)&=\kappa^{*}(\theta^{*}-V(t))dt+\sigma\sqrt{V(t)}(\rho dW^{*}_{1}(t)+\sqrt{1-\rho^2} dW^{*}_{2}(t)),\quad V(0)=V_0>0,\\
dr(t)&=\left[\alpha^{*}\beta^{*}-\left(\alpha^{*}+B(t,T)\eta^{2}\right)r(t)\right]dt+\eta \sqrt{r(t)}dW^{*}_{3}(t),\quad r(0)=r_0>0,
\end{aligned}
\right.
\end{equation}
where
$$
dW^{*}_{1}(t)=dW^{\mathbb{Q}}_{1}(t),\quad \widetilde{N}^{*}_x(dt,dx)=\widetilde{N}^{\mathbb{Q}}_x(dt,dx), \quad \kappa^{*}=\kappa^{\mathbb{Q}},\quad \theta^{*}=\theta^{\mathbb{Q}}, \quad dW^{*}_{2}(t)=dW^{\mathbb{Q}}_{2}(t)$$
and
$$\quad\alpha^{*}=\alpha^{\mathbb{Q}},\quad \beta^{*}=\beta^{\mathbb{Q}},\quad dW^{*}_{3}(t)=dW^{\mathbb{Q}}_{3}(t)+B(t,T)\eta \sqrt{r(t)}dt.$$
\end{lemma}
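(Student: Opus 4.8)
The plan is to exhibit the Radon--Nikod\'ym density that changes $\mathbb{Q}$ into the $T$-forward measure $\mathbb{Q}^{T}$, compute its stochastic logarithm from the two expressions for $d\ln N_{1,t}$ and $d\ln N_{2,t}$ recorded above, and then read off the Brownian drift corrections from Girsanov's theorem. Since $N_{1,t}=e^{\int_0^t r(s)ds}$ is the money-market num\'eraire under $\mathbb{Q}$ and $N_{2,t}=P(t,T)$ is the num\'eraire associated with $\mathbb{Q}^{T}$, the density process is
\begin{equation*}
Z_t:=\frac{d\mathbb{Q}^{T}}{d\mathbb{Q}}\Big|_{\mathcal{F}_t}=\frac{N_{2,t}/N_{2,0}}{N_{1,t}/N_{1,0}}=\frac{P(t,T)}{P(0,T)\,e^{\int_0^t r(s)ds}},
\end{equation*}
a strictly positive $\mathbb{Q}$-martingale with $Z_0=1$. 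Because $\ln P(0,T)$ is constant, $d\ln Z_t=d\ln N_{2,t}-d\ln N_{1,t}$, and substituting the displayed formulas for $d\ln N_{2,t}$ and $d\ln N_{1,t}$ shows that the martingale $Z_t$ has stochastic differential
\begin{equation*}
\frac{dZ_t}{Z_t}=-B(t,T)\,\eta\sqrt{r(t)}\,dW^{\mathbb{Q}}_{3}(t),
\end{equation*}
the drift of $d\ln Z_t$ being forced to equal $-\tfrac12 B(t,T)^2\eta^2 r(t)$ precisely because $Z_t$ is a $\mathbb{Q}$-martingale (the Riccati identities satisfied by $A(t,T)$ and $B(t,T)$ produce this cancellation, but they need not be invoked once the martingale property is used).

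Next I would apply Girsanov's theorem with this density. Since the integrand of $Z_t$ loads only on $W^{\mathbb{Q}}_{3}$, the process
\begin{equation*}
W^{*}_{3}(t):=W^{\mathbb{Q}}_{3}(t)+\int_0^t B(s,T)\,\eta\sqrt{r(s)}\,ds
\end{equation*}
is a $\mathbb{Q}^{T}$-Brownian motion, i.e. $dW^{*}_{3}(t)=dW^{\mathbb{Q}}_{3}(t)+B(t,T)\eta\sqrt{r(t)}\,dt$, while $W^{\mathbb{Q}}_{1}$ and $W^{\mathbb{Q}}_{2}$, being independent of $W^{\mathbb{Q}}_{3}$, have vanishing covariation with $\ln Z$ and hence remain Brownian motions under $\mathbb{Q}^{T}$, so $dW^{*}_{1}=dW^{\mathbb{Q}}_{1}$ and $dW^{*}_{2}=dW^{\mathbb{Q}}_{2}$. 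Moreover $Z_t$ is \emph{continuous}, so the jump-measure compensator is unaffected by the change of measure and $\widetilde{N}^{*}_x(dt,dx)=\widetilde{N}^{\mathbb{Q}}_x(dt,dx)$, with $v^{*}_x=v^{\mathbb{Q}}_x$.

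Finally I would substitute $dW^{\mathbb{Q}}_{i}=dW^{*}_{i}$ for $i=1,2$, $dW^{\mathbb{Q}}_{3}=dW^{*}_{3}-B(t,T)\eta\sqrt{r(t)}\,dt$, and $\widetilde{N}^{\mathbb{Q}}_x=\widetilde{N}^{*}_x$ into system \eqref{neutral}. The equations for $S(t)$ and $V(t)$ are unchanged in form, giving $\kappa^{*}=\kappa^{\mathbb{Q}}$, $\theta^{*}=\theta^{\mathbb{Q}}$; in the equation for $r(t)$ the extra term $-B(t,T)\eta^{2}r(t)\,dt$ appears, so collecting the linear terms yields $dr(t)=[\alpha^{\mathbb{Q}}\beta^{\mathbb{Q}}-(\alpha^{\mathbb{Q}}+B(t,T)\eta^{2})r(t)]\,dt+\eta\sqrt{r(t)}\,dW^{*}_{3}(t)$, which is \eqref{forward} with $\alpha^{*}=\alpha^{\mathbb{Q}}$ and $\beta^{*}=\beta^{\mathbb{Q}}$. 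The main technical obstacle is the rigorous justification that $Z_t$ is a genuine (uniformly integrable) $\mathbb{Q}$-martingale rather than merely a local martingale, which calls for an integrability estimate for the exponential of a CIR-driven stochastic integral (a Novikov- or Kazamaki-type argument exploiting $2\alpha^{\mathbb{Q}}\beta^{\mathbb{Q}}\ge\eta^{2}$ and the boundedness of $B(\cdot,T)$ on $[0,T]$), together with the care needed to confirm that the purely continuous density leaves the jump compensator invariant.
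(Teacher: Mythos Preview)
Your proposal is correct and follows essentially the same approach as the paper: the paper computes $d\ln N_{1,t}$ and $d\ln N_{2,t}$ for the two num\'eraires and then simply states that ``from the above discussion'' the lemma follows, without writing out the Girsanov step or the substitution into \eqref{neutral}. Your argument makes explicit exactly what the paper leaves implicit---the form of the density $Z_t$, the identification of its stochastic logarithm, the observation that only $W^{\mathbb{Q}}_3$ acquires a drift while $W^{\mathbb{Q}}_1$, $W^{\mathbb{Q}}_2$ and the jump compensator are untouched---so in fact you supply more detail than the paper does; the integrability caveat you raise about $Z_t$ being a true martingale is a fair technical point that the paper does not address either.
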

\begin{remark}
 Lemma \ref{l+} shows that the transformation of measure only depends on the interest rate processes.
 \end{remark}

\begin{proposition}\label{31}
Let $X(t)=\ln S(t)$ be the log-price process and $B(t)=B(T-t,T)$. Then the joint moment-generating function of joint processes $X(t)$, $V(t)$ and $r(t)$ in (\ref{forward}) can be defined as follows
$$U(t,X,V,r):=\mathbb{E}^{T}\left[\exp(\omega X_T+ \varphi V_T+ \psi r_T+\chi)|X(t)=X,V(t)=V,r(t)=r\right],$$
where $\varphi$, $\psi$ and $\chi$ are constant parameters. Moreover, if
$$\frac{\sigma-\sqrt{\sigma^2+4{\kappa}^{*2}}}{2\sigma}<\omega\leq 0,\quad \varphi\leq 0, \quad \psi\leq 0,$$
then the value of $U(\tau,X,V,r)$ at $\tau := T-t$ is given by
$$U(\tau,X,V,r)=\exp(\omega X+C(\tau;q)V+D(\tau;q)r+E(\tau;q)),$$
where $q=(\omega,\varphi,\psi,\chi)$ and
$C(\tau;q),D(\tau;q),E(\tau;q)$ satisfy
\begin{eqnarray}\label{odes}
\left\{
\begin{aligned}
&\frac{d C(\tau;q)}{d \tau}=\frac{1}{2}\sigma^{2}C^{2}(\tau;q)+(\rho\sigma\omega-\kappa^*)C(\tau;q)+\frac{1}{2}(\omega^2-\omega),\\
&\frac{d D(\tau;q)}{d\tau}=\frac{1}{2}\eta D^{2}(\tau;q)-(\alpha^{*}+B(\tau)\eta^2)D(\tau;q)+\omega,\\
&\frac{d E(\tau;q)}{d\tau}=\kappa^*\theta^{*} C(\tau;q)+\alpha^{*}\beta^{*} D(\tau;q)+\int_{R}\left[(e^{\omega x}-1)-\omega(e^{x}-1)\right]\nu^{*}_{x}(dx)
\end{aligned}
\right.
\end{eqnarray}
with initial conditions
\begin{equation}\label{conditions}
C(0;q)=\varphi,\quad D(0;q)=\psi,\quad E(0;q)=\chi.
\end{equation}
\end{proposition}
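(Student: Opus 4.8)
The plan is to realize $U$ as the solution of a backward Kolmogorov partial integro-differential equation (PIDE) through the Feynman--Kac formula, and then to collapse that PIDE to the ODE system \eqref{odes} by inserting the exponential-affine ansatz.

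First I would pass to the log-price $X=\ln S$ via It\^{o}'s formula: the drift $r(t)$ of $S$ together with the jump compensator produces the drift $r-\tfrac12 V+\int_R[x-(e^x-1)]\nu^*_x(dx)$ for $X$. The infinitesimal generator $\mathcal{A}$ of the Markov process $(X,V,r)$ in \eqref{forward}, acting on a smooth test function $g$, then consists of the three drift terms; the second-order part $\tfrac12 V g_{XX}+\tfrac12\sigma^2 V g_{VV}+\tfrac12\eta^2 r g_{rr}+\rho\sigma V g_{XV}$ (the only cross term, since $W_3^*$ is independent of $W_1^*,W_2^*$); and the jump operator $\int_R[g(X+x,V,r)-g(X,V,r)-x\,g_X(X,V,r)]\,\nu^*_x(dx)$. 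By Feynman--Kac, the conditional expectation $U(t,X,V,r)$ solves $U_t+\mathcal{A}U=0$ on $[0,T)$ with terminal datum $U(T,\cdot)=\exp(\omega X+\varphi V+\psi r+\chi)$; putting $\tau=T-t$ turns this into $U_\tau=\mathcal{A}U$ with $U(0,\cdot)=\exp(\omega X+\varphi V+\psi r+\chi)$.

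Next I would exploit that no coefficient of \eqref{forward} depends on $X$ and that $X$ jumps additively, so the $X$-dependence of $U$ stays exactly $e^{\omega X}$ with $\omega$ frozen, while the remaining dependence is affine in $(V,r)$; this is the justification for the ansatz $U=\exp(\omega X+C(\tau)V+D(\tau)r+E(\tau))$. Substituting and dividing by $U$, the left-hand side becomes $C'(\tau)V+D'(\tau)r+E'(\tau)$. On the right-hand side the jump operator collapses since $g(X+x,\cdot)=e^{\omega x}g$, and after absorbing the compensator pieces of the $X$-drift it produces the $x$-free constant $\int_R[(e^{\omega x}-1)-\omega(e^x-1)]\nu^*_x(dx)$; the rest is affine in $V$ and in $r$. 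Matching the coefficient of $V$ yields the Riccati equation $C'=\tfrac12\sigma^2C^2+(\rho\sigma\omega-\kappa^*)C+\tfrac12(\omega^2-\omega)$, matching the coefficient of $r$ yields the time-inhomogeneous Riccati equation for $D$ (the $\tau$-dependence entering through $B$), and matching the constant yields the quadrature for $E$; the terminal datum forces $C(0)=\varphi$, $D(0)=\psi$, $E(0)=\chi$. This is precisely \eqref{odes} together with \eqref{conditions}.

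The real content is in the justification, not in the formal matching. I would (i) check that the jump integral converges, using $e^{\omega x}-1-\omega x=O(x^2)$ and $e^x-1-x=O(x^2)$ near the origin against $\int_R\min\{1,x^2\}\nu^*_x(dx)<\infty$, together with the tail bound $\int_R(e^x-1)\nu^*_x(dx)<\infty$ of Remark \ref{remarkeee} and $\omega\le0$; (ii) show that under the stated window $\tfrac{\sigma-\sqrt{\sigma^2+4\kappa^{*2}}}{2\sigma}<\omega\le0$, $\varphi\le0$, $\psi\le0$ the Riccati equation for $C$ and the linear-coefficient Riccati equation for $D$ admit solutions on all of $[0,\tau]$ without finite-time explosion and with $C,D\le0$ — this is exactly the Heston-type moment-stability restriction that also keeps $U$ finite; and (iii) run the usual verification argument: apply It\^{o}'s formula to $s\mapsto\exp(\omega X_s+C(T-s)V_s+D(T-s)r_s+E(T-s))$ along \eqref{forward}, use (ii) to control the Brownian and compensated-jump stochastic integrals so that this process is a genuine (not merely local) martingale, and take conditional expectations under $\mathbb{Q}^{T}$ to identify it with $U$. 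I expect step (ii) — global solvability of the Riccati system and the consequent finiteness of the exponential functional on the stated parameter range — to be the main obstacle; the generator computation and the coefficient matching are routine.
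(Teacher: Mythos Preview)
Your proposal is correct and follows essentially the same route as the paper: derive the backward PIDE for $U$ from the generator of $(X,V,r)$ under $\mathbb{Q}^T$, insert the exponential--affine ansatz, and match coefficients of $V$, $r$, and the constant to obtain \eqref{odes}--\eqref{conditions}. The only noteworthy organizational difference is in the justification step: where you plan to prove non-explosion of the Riccati solutions on the stated parameter window and then run a verification argument on the candidate exponential, the paper instead first establishes that $U$ is a genuine $\mathcal{F}_t$-martingale by bounding the terminal moment directly---it factors $\exp(\omega X_T+\varphi V_T+\psi r_T)$ into three pieces $I_1,I_2,I_3$ (exploiting the independence of $W_3^*$ from $(W_1^*,W_2^*)$ and the jump part), handles the Heston piece $I_1$ by conditioning on the volatility path and invoking a known CIR exponential-moment bound, and bounds $I_2,I_3$ trivially from $\omega,\psi\le0$. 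Both verifications are standard and lead to the same conclusion; the paper's has the minor advantage that finiteness of $U$ is settled before the PIDE is written down, while yours makes the role of the parameter restriction on $\omega$ more transparent through the Riccati analysis.
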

\begin{proof}
It follows from (\ref{forward}) that
\begin{equation}\label{forward_X}
\left\{
\begin{aligned}
dX(t)&=\left(r(t)-\frac{1}{2}V(t)\right)dt+\sqrt{V(t)}dW^{*}_{1}(t)+\int_R (e^x-1)\widetilde{N}^{*}_x(dt,dx),\quad S(0)=S_0>0,\\
dV(t)&=\kappa^{*}(\theta^{*}-V(t))dt+\sigma\sqrt{V(t)}(\rho dW^{*}_{1}(t)+\sqrt{1-\rho^2} dW^{*}_{2}(t)),\quad V(0)=V_0>0,\\
dr(t)&=\left[\alpha^{*}\beta^{*}-\left(\alpha^{*}+B(t,T)\eta^{2}\right)r(t)\right]dt+\eta \sqrt{r(t)}dW^{*}_{3}(t),\quad r(0)=r_0>0.
\end{aligned}
\right.
\end{equation}

Next we prove that $\{U(t,X_t,V_t,r_t)\}_{0\leq t\leq T}$ is an $\mathcal{F}_t$-martingale. In fact,  by the Markov property of $X,V,r$, one has
\begin{align*}
\mathbb{E}^T[U(t,X_t,V_t,r_t)|\mathcal{F}_s]&=\mathbb{E}^T[\mathbb{E}^T[\exp(\omega X_T+ \varphi V_T+ \psi r_T+\chi)|\mathcal{F}_t]|\mathcal{F}_s]\\
&=\mathbb{E}^T[\exp(\omega X_T+ \varphi V_T+ \psi r_T+\chi)|\mathcal{F}_s]\\
&=U(s,X_s,V_s,r_s).
\end{align*}
Thus, it suffices to prove that
 $$\mathbb{E}^T[U(t,X_t,V_t,r_t)]=\mathbb{E}^{T}\left[\exp(\omega X_T+ \varphi V_T+ \psi r_T+\chi)\right]<+\infty$$
for each $0\leq t\leq T$. Denote
\begin{align*}
I_1=&\mathbb{E}^T\left[\exp\left(\varphi V_T-\frac{\omega}{2}\int_{0}^{T}V_tdt+\omega\int_{0}^{T}\sqrt{V_t}dW_1(t)\right)\right],\\
I_2=&\mathbb{E}^T\left[\exp\left(\psi r_T+\omega\int_{0}^{T}r(t)dt\right)\right],\\
I_3=&\mathbb{E}^T\left[\exp\left(\omega\int_{0}^T\int_R (e^x-1)\widetilde{N}^{*}_x(dt,dx)\right)\right].
\end{align*}
Since $\varphi\leq 0$ and $\frac{\sigma-\sqrt{\sigma^2+4{\kappa}^{*2}}}{2\sigma}<\omega\leq 0$, we know that  $\frac{\omega^2-\omega}{2}<\frac{{\kappa}^{*2}}{2\sigma^2}$ and so Corollary 5 (for the one-dimensional case) in \cite{Chr} shows that
\begin{eqnarray*}
I_1&=&\mathbb{E}^T\left[\mathbb{E}^T\left[\exp\left(\varphi V_T-\frac{\omega}{2}\int_{0}^{T}V_tdt+\omega\int_{0}^{T}\sqrt{V_t}dW_1(t)\right)\bigg|V(t)_{0\leq t\leq T}\right]\right]\\
&\leq&\mathbb{E}^T\left[\exp\left(\frac{\omega^2-\omega}{2}\int_{0}^{T}V_tdt\right)\right]< +\infty.
\end{eqnarray*}
By conditions $\omega\leq 0$ and $\psi\leq 0$,  the positive property of $r$ implies that
$I_2\leq 1$.   Since $1-e^{\omega(e^x-1)}\leq -\omega(e^x-1)$, it is seen by Remark \ref{remarkeee} and Proposition 11.2.2.5 in \cite{Mon} that
$I_3< +\infty$ and so
$$
\mathbb{E}^{T}\left[\exp(\omega X_T+ \varphi V_T+ \psi r_T+\chi)\right]=I_1I_2I_3<+\infty,
$$
which shows that $\{U(t,X_t,V_t,r_t)\}_{0\leq t\leq T}$ is an $\mathcal{F}_t$-martingale.

Now by applying It\^{o} Lemma to $U(t,X_t,V_t,r_t)$, we can obtain a partial integral-differential equation (PIDE) for $U(t,X,V,r)$ as follows
\begin{eqnarray*}
0&=&  \frac{\partial U}{\partial t}+\left(r-\frac{1}{2}V\right)\frac{\partial U}{\partial X}+[\kappa^{*}(\theta^{*}-V)]\frac{\partial U}{\partial V}+\left[\alpha^{*}\beta^{*}-(\alpha^{*}+B(t,T)\eta^{2})r\right]\frac{\partial U}{\partial r}\nonumber\\
&& \mbox{} +\frac{1}{2}V\frac{\partial^{2} U}{\partial X^{2}}
+\rho \sigma V\frac{\partial^{2} U}{\partial X \partial V}
+\frac{1}{2}\sigma^{2}V\frac{\partial^{2} U}{\partial V^{2}}
+\frac{1}{2}\eta^{2}r\frac{\partial^{2} U}{\partial r^{2}}\nonumber\\
&& \mbox{} +\int_{R}\left[U(t-,X+x,V,r)-U(t-,X,V,r)-(e^{x}-1)\frac{\partial U}{\partial X}\right]\nu^{*}_{x}(dx).
\end{eqnarray*}
Denoting $\tau=T-t$ , we get
\begin{eqnarray}\label{PIDEt}
\frac{\partial U}{\partial \tau}&= & (r-\frac{1}{2}V)\frac{\partial U}{\partial X}+[\kappa^{*}(\theta^{*}-V)]\frac{\partial U}{\partial V}+\left[\alpha^{*}\beta^{*}-(\alpha^{*}+B(\tau)\eta^{2})r\right]\frac{\partial U}{\partial r}\nonumber\\
&& \mbox{} +\frac{1}{2}V\frac{\partial^{2} U}{\partial X^{2}}
+\rho \sigma V\frac{\partial^{2} U}{\partial X \partial V}
+\frac{1}{2}\sigma^{2}V\frac{\partial^{2} U}{\partial V^{2}}
+\frac{1}{2}\eta^{2}r\frac{\partial^{2} U}{\partial r^{2}}\nonumber\\
&& \mbox{} +\int_{R}\left[U(\tau,X+x,V,r)-U(\tau,X,V,r)-(e^{x}-1)\frac{\partial U}{\partial X}\right]\nu^{*}_{x}(dx).
\end{eqnarray}

Thanks to the affine structure in the SVSJ model, (\ref{PIDEt}) admits an analytic solution of the following form:
\begin{equation}\label{affine}
U(\tau,X,V,r)=\exp(\omega X+C(\tau;q)V+D(\tau;q)r+E(\tau;q))
\end{equation}
with the initial condition 
$$U(0,X,V,r)=\exp(\omega X+ \varphi V+ \psi r+\chi).$$
Combining (\ref{affine}) with (\ref{PIDEt}), we find that $C(\tau;q),D(\tau;q),E(\tau;q)$ satisfy  system (\ref{odes}) with initial conditions (\ref{conditions}).
\end{proof}

\begin{remark} We note that
$\frac{\sigma-\sqrt{\sigma^2+4{\kappa}^{*2}}}{2\sigma}<\omega\leq 0,\varphi\leq 0$ and $\psi\leq 0$ are sufficient (but not necessary) conditions to ensure that $\{U(t,X_t,V_t,r_t)\}_{0\leq t\leq T}$ is an $\mathcal{F}_t$-martingale. Using the terminology in \cite{Duffie}, it is easy to see that \eqref{odes},  \eqref{forward_X}, and \eqref{affine} are ``well-behaved'' at $(q,T)$.
\end{remark}

\begin{proposition}\label{32}
If all the conditions of Theorem \ref{31} holds, then $U(\tau,X,V,r)$ can be expressed as
$$U(\tau,X,V,r)=\exp(\omega X+C(\tau;q)V+D(\tau;q)r+E(\tau;q)),$$
where
\begin{align*}
&q=(\omega,\varphi,\psi,\chi),\\
&C(\tau;q)=\frac{\varphi\left(\xi_- \exp(-\zeta\tau)+\xi_+\right)+(\omega^2-\omega)\left(1-\exp(-\zeta\tau)\right)}{(\xi_+ +\varphi\sigma^2)\exp(-\zeta \tau)+\xi_- -\varphi\sigma^2},\\
&D(\tau;q)=G(\tau)F(\tau)^{-1},\\
&E(\tau;q)=\chi-\frac{2\kappa^*\theta^*}{\sigma^2}\left[\xi_+\tau+2\ln\frac{(\xi_+ +\varphi\sigma^2)\exp(-\zeta \tau)+\xi_- -\varphi\sigma^2}{2\zeta}\right]+\alpha^{*}\beta^{*} \int_0^\tau G(u)F(u)^{-1}du+J\tau,\\
&\xi_\pm=\zeta\mp (\kappa^*-\rho\sigma\omega),\\
&\zeta=\sqrt{(\kappa^*-\rho\sigma\omega)^2+\sigma^2(\omega-\omega^2)},\\
&J=\int_{R}\left[(e^{\omega x}-1)-\omega(e^{x}-1)\right]\nu^{*}_{x}(dx)
\end{align*}
with
\begin{equation*}
 \begin{bmatrix}F(\tau) \\ G(\tau) \end{bmatrix}
   =\mathcal{T} \exp\begin{bmatrix}\frac{1}{2}\int_0^\tau(\alpha^*+B(t)\eta^2)dt & -\frac{1}{2}\eta\tau \\ -\omega\tau & -\frac{1}{2}\int_0^\tau(\alpha^*+B(t)\eta^2)dt \end{bmatrix},
\end{equation*}
and $\mathcal{T }\exp$ denotes the time ordered exponential.
\end{proposition}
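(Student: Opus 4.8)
The plan is to integrate the decoupled system \eqref{odes}--\eqref{conditions} one line at a time: the $C$-equation is an autonomous Riccati equation, the $D$-equation is a non-autonomous Riccati equation that linearizes, and the $E$-equation is then a single quadrature. Existence of $C,D,E$ on $[0,T]$ and the affine representation \eqref{affine} have already been secured in Proposition \ref{31}, so only the explicit formulas remain. For $C$, since the coefficients of the first equation in \eqref{odes} are constant, I would set $C(\tau;q)=-\tfrac{2}{\sigma^{2}}\,\partial_{\tau}\ln w(\tau)$, which reduces it to the linear constant-coefficient equation $w''-(\rho\sigma\omega-\kappa^{*})w'+\tfrac{\sigma^{2}}{4}(\omega^{2}-\omega)w=0$. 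Its characteristic roots are $\xi_{+}/2$ and $-\xi_{-}/2$ with $\xi_{\pm}=\zeta\mp(\kappa^{*}-\rho\sigma\omega)$ and $\zeta=\sqrt{(\kappa^{*}-\rho\sigma\omega)^{2}+\sigma^{2}(\omega-\omega^{2})}$, so $w(\tau)=a_{+}e^{\xi_{+}\tau/2}+a_{-}e^{-\xi_{-}\tau/2}$; imposing $C(0;q)=\varphi$ pins down the ratio $a_{+}:a_{-}$, and collecting terms (using $\xi_{+}+\xi_{-}=2\zeta$) gives precisely the stated rational--exponential form of $C(\tau;q)$. I would also record the identity $\int_{0}^{\tau}C(u;q)\,du=-\tfrac{2}{\sigma^{2}}\ln\!\big(w(\tau)/w(0)\big)$ for later use.

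The function $D$ is the substantive step. Because of the time-dependent coefficient $B(\tau)\eta^{2}$, the second equation in \eqref{odes} is a non-autonomous Riccati equation with no elementary antiderivative, so I would linearize it: writing $D(\tau;q)=G(\tau)F(\tau)^{-1}$ converts the scalar Riccati equation into the first-order linear system $\tfrac{d}{d\tau}\big(F(\tau),G(\tau)\big)^{\!\top}=M(\tau)\big(F(\tau),G(\tau)\big)^{\!\top}$ with $M(\tau)$ the $2\times2$ matrix in the statement and initial data $F(0)=1,\ G(0)=\psi$ (so that $D(0;q)=\psi$). Since $M(\tau_{1})$ and $M(\tau_{2})$ fail to commute once $B$ is non-constant, the fundamental solution of this system is precisely the time-ordered exponential $\mathcal{T}\exp\!\int_{0}^{\tau}M(t)\,dt$, which yields the asserted representation of $D$. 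The delicate point is that $F$ must not vanish on $[0,\tau]$, so that $D=G/F$ stays finite; under $\omega\le0,\ \psi\le0$ this follows from the fact (Proposition \ref{31}) that $U$, and hence $D$, is well defined on the whole of $[0,T]$, or alternatively from a Riccati comparison argument. I expect this non-autonomous linearization, together with the justification that the time-ordered exponential is the correct and finite object here, to be the main obstacle.

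Finally, $E$ is obtained by direct integration of the third equation in \eqref{odes}: $E(\tau;q)=\chi+\int_{0}^{\tau}\big(\kappa^{*}\theta^{*}C(u;q)+\alpha^{*}\beta^{*}D(u;q)+J\big)\,du$ with $J=\int_{R}\big[(e^{\omega x}-1)-\omega(e^{x}-1)\big]\nu^{*}_{x}(dx)$. The constant contributes $J\tau$; the $D$-term is left as $\alpha^{*}\beta^{*}\int_{0}^{\tau}G(u)F(u)^{-1}\,du$; and for the $C$-term I use the identity recorded above, re-expressing $w$ through the denominator $(\xi_{+}+\varphi\sigma^{2})e^{-\zeta\tau}+\xi_{-}-\varphi\sigma^{2}$ that already appears in $C(\tau;q)$: factoring out $e^{\xi_{+}\tau/2}$ and normalising by $w(0)$ turns $\kappa^{*}\theta^{*}\int_{0}^{\tau}C(u;q)\,du$ into the $\xi_{+}\tau+2\ln(\cdot/2\zeta)$ contribution in the claimed formula for $E$. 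A concluding check that $C(0;q)=\varphi$, $D(0;q)=\psi$ and $E(0;q)=\chi$ verifies the initial conditions \eqref{conditions}, which completes the argument.
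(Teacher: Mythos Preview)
Your proposal is correct and follows essentially the same route as the paper: the substitution $C=-\tfrac{2}{\sigma^2}\partial_\tau\ln w$ (the paper writes $I$ for your $w$) to linearize the autonomous Riccati equation, the ratio representation $D=G/F$ reducing the non-autonomous Riccati equation to a $2\times2$ linear system with time-ordered exponential solution, and direct quadrature for $E$ using the logarithmic antiderivative of $C$. The only cosmetic difference is that the paper first passes through $K=-D$ and cites a reference for existence on $[0,T]$, whereas you invoke Proposition~\ref{31} (or a comparison argument) directly for the non-vanishing of $F$; your treatment is in fact slightly more explicit about why the time-ordered exponential is the right object when $B(\tau)$ is non-constant.
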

\begin{proof}
For simplicity, $C(\tau;q)$, $D(\tau;q)$ and $E(\tau;q)$ are replaced with $C(\tau)$, $D(\tau)$ and $E(\tau)$, respectively.  By the assumption
$C(\tau)=-\frac{2I'(\tau)}{\sigma^2 I(\tau)}$,  we know that the Riccati equation
$$
\left\{
\begin{aligned}
&\frac{d C(\tau)}{d \tau}=\frac{1}{2}\sigma^{2}C^{2}(\tau)+(\rho\sigma\omega-\kappa^*)C(\tau)+\frac{1}{2}(\omega^2-\omega),\\
&C(0)=\varphi
\end{aligned}
\right.
$$
can be transformed to the following form
$$ \left\{
\begin{aligned}
&I''(\tau)-(\rho\sigma\omega-\kappa^*)I'(\tau)+\frac{1}{4}\sigma^2(\omega^2-\omega)I(\tau)=0,\\
&-\frac{2I'(0)}{\sigma^2 I(0)}=\varphi.
\end{aligned}
\right.
$$
It follows that
$$I(\tau)=\frac{I(0)}{2 \zeta} \left[(\xi_+ + \varphi \sigma^2)\exp(-\frac{1}{2}\xi_- \tau)+(\xi_- -\varphi \sigma^2)\exp(-\frac{1}{2}\xi_+ \tau)\right],$$
where
$$ \zeta=\sqrt{(\kappa^*-\rho\sigma\omega)^2+\sigma^2(\omega-\omega^2)},\quad \xi_\pm=\zeta\mp (\kappa^*-\rho\sigma\omega).$$
Therefore,
$$C(\tau)=\frac{\varphi\left(\xi_- \exp(-\zeta\tau)+\xi_+\right)+(\omega^2-\omega)(1-\exp(-\zeta\tau))}{(\xi_+ +\varphi\sigma^2)\exp(-\zeta \tau)+\xi_- -\varphi\sigma^2}.$$

Let $K(\tau)=-D(\tau)$ for $0\leq\tau\leq T$.  Then it follows that
\begin{equation}\label{temp}
\frac{dK(\tau)}{d\tau}=-\frac{1}{2}\eta K^{2}(\tau)-(\alpha^{*}+B(\tau)\eta^2)K(\tau)-\omega
\end{equation}
with the initial condition $K(0)=-\psi$. Since $\omega\leq 0$ and $\psi\leq 0$,  by Theorem 8.5 in \cite{Gerhard}, we know that the solution of (\ref{temp}) exists for $0\leq\tau\leq T$.  Furthermore, Theorem 3.2 of \cite{Gerhard} implies that $K(\tau)$ can be expressed as follows
$$K(\tau)=\frac{G(\tau)}{F(\tau)},$$
where $G(\tau)$ and $F(\tau)$ satisfy the following differential equations:
\begin{equation}\label{h19}
  -\frac{d}{d\tau}\begin{bmatrix}F(\tau) \\ G(\tau) \end{bmatrix}
   =\begin{bmatrix}-\frac{1}{2}(\alpha^*+B(\tau)\eta^2) & \frac{1}{2}\eta \\ \omega & \frac{1}{2}(\alpha^*+B(\tau)\eta^2) \end{bmatrix}
   \begin{bmatrix}F(\tau) \\ G(\tau) \end{bmatrix}.
\end{equation}
Clearly, it follows from \eqref{h19} that
 \begin{equation*}
  \begin{bmatrix}F(\tau) \\ G(\tau) \end{bmatrix}
   =\mathcal{T} \exp\begin{bmatrix}\frac{1}{2}\int_0^\tau(\alpha^*+B(t)\eta^2)dt & -\frac{1}{2}\eta\tau \\ -\omega\tau & -\frac{1}{2}\int_0^\tau(\alpha^*+B(t)\eta^2)dt \end{bmatrix}.
\end{equation*}
 Let
$$
J=\int_{R}[(e^{\omega x}-1)-\omega(e^{x}-1)]\nu^{*}_{x}(dx).
$$
Then, it follows from \eqref{odes} that
\begin{eqnarray*}
E(\tau)&=&\chi+\kappa^*\theta^{*} \int_0^\tau C(u)du+\alpha^{*}\beta^{*} \int_0^\tau D(u)du+J\tau\\
& =&\chi-\frac{2\kappa^*\theta^*}{\sigma^2}\left[\xi_+\tau+2\ln\frac{(\xi_+ +\varphi\sigma^2)\exp(-\zeta \tau)+\xi_- -\varphi\sigma^2}{2\zeta}\right]-\alpha^{*}\beta^{*} \int_0^\tau G(u)F(u)^{-1}du+J\tau.
\end{eqnarray*}
This completes the proof. \end{proof}
\begin{remark}\label{prpr}
We would like to mention some facts as follows:
\begin{itemize}
\item[$(i)$]{Proposition \ref{32} gives a representation of the joint moment-generating function $U(\tau,X,V,r)$ of the log-return process $X$, the volatility $V$ and the interest rate $r$. The fact that the moment-generating function can be expressed by an affine form turns out to be of fundamental importance in applications of stochastic volatility models (see, for example, \cite{Duffie,Madan,Ruan,Zheng}).}
\item[$(ii)$]{It is difficult to give an exact closed form solution for Ricatti equation (\ref{h19}) and so we should employ the numerical methods to solve it.}
\item[$(iii)$]{By Theorem 8.5 in \cite{Gerhard} and the conditions $\frac{\sigma-\sqrt{\sigma^2+4{\kappa}^{*2}}}{2\sigma}<\omega\leq 0$, $\varphi\leq 0$ and $\psi\leq 0$, we know that $C(\tau;q)$ and $D(\tau;q)$ stay non-positive for $0\leq\tau\leq T$.}
\end{itemize}
\end{remark}

Now we are in the position to price variance swaps by employing the joint moment-generating function given above.
\begin{proposition}\label{vcp}
The fair strike price of variance swaps is given by
\begin{equation}\label{price}
K_V=\left.\mathrm{NA}\times\sum_{i=1}^{n}\frac{\partial^{2}}{\partial\omega^2}\exp(C(t_{i-1};q_2)V_{0}+D( t_{i-1};q_2)r_{0}+E(t_{i-2};q_2))\right|_{\omega=0^-},
\end{equation}
where $q_2=\left(0,C(\Delta t;q_1),D(\Delta t;q_1),E(\Delta t;q_1)\right)$, $\mathrm{NA}$ is the nominal amount and $\omega=0^-$ represents the left derivative at $\omega=0$.
\end{proposition}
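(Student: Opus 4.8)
The plan is to reduce the computation of $K_V=\mathbb{E}^{T}[RV\,|\,\mathcal{F}_0]$ to two nested applications of the exponential-affine formula of Propositions \ref{31}--\ref{32}. Writing the realized variance as a scaled sum of squared log-returns over the uniform grid $t_i=i\Delta t$, $\Delta t=T/n$, and absorbing the annualization and notional constants into $\mathrm{NA}$, we get
\begin{equation*}
K_V=\mathrm{NA}\times\sum_{i=1}^{n}\mathbb{E}^{T}\!\left[\bigl(X_{t_i}-X_{t_{i-1}}\bigr)^{2}\,\big|\,\mathcal{F}_0\right],\qquad X=\ln S.
\end{equation*}
The first step is to represent each summand as a one-sided second derivative of a moment-generating function, namely $\mathbb{E}^{T}[(X_{t_i}-X_{t_{i-1}})^{2}\,|\,\mathcal{F}_0]=\partial_\omega^{2}\big|_{\omega=0^-}\mathbb{E}^{T}[e^{\omega(X_{t_i}-X_{t_{i-1}})}\,|\,\mathcal{F}_0]$; the interchange of $\partial_\omega^{2}$ and $\mathbb{E}^{T}$ is justified by the integrability bounds $I_1,I_2,I_3$ established inside the proof of Proposition \ref{31}, which show that $\omega\mapsto\mathbb{E}^{T}[e^{\omega(X_{t_i}-X_{t_{i-1}})}\,|\,\mathcal{F}_0]$ is finite and smooth on $\left(\tfrac{\sigma-\sqrt{\sigma^2+4\kappa^{*2}}}{2\sigma},0\right]$ --- a set for which $0$ is a right endpoint, which is precisely why the left derivative $0^-$ occurs.

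Next I would condition on $\mathcal{F}_{t_{i-1}}$ and invoke the Markov property of $(X,V,r)$ under $\mathbb{Q}^{T}$. For a fixed admissible $\omega$, Proposition \ref{31} with time-to-go $\tau=\Delta t$ and parameters $q_1=(\omega,0,0,0)$ gives
\begin{equation*}
\mathbb{E}^{T}\!\left[e^{\omega X_{t_i}}\,\big|\,\mathcal{F}_{t_{i-1}}\right]=\exp\!\bigl(\omega X_{t_{i-1}}+C(\Delta t;q_1)V_{t_{i-1}}+D(\Delta t;q_1)r_{t_{i-1}}+E(\Delta t;q_1)\bigr),
\end{equation*}
so the factor $e^{-\omega X_{t_{i-1}}}$ cancels and the conditional expectation of the increment depends on $\mathcal{F}_{t_{i-1}}$ only through $V_{t_{i-1}},r_{t_{i-1}}$ in exponential-affine form. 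By Remark \ref{prpr}$(iii)$, for $\omega\le 0$ the coefficients $C(\Delta t;q_1)$ and $D(\Delta t;q_1)$ are non-positive, so the vector $q_2=(0,C(\Delta t;q_1),D(\Delta t;q_1),E(\Delta t;q_1))$ again satisfies the hypotheses $\varphi\le0$, $\psi\le0$ of Proposition \ref{31} (with its $\omega$-slot equal to $0$). A second application of Proposition \ref{32}, now with horizon $\tau=t_{i-1}$ and parameter vector $q_2$, evaluates the remaining outer $\mathbb{Q}^{T}$-expectation over $\mathcal{F}_0$ as $\exp\!\bigl(C(t_{i-1};q_2)V_0+D(t_{i-1};q_2)r_0+E(t_{i-1};q_2)\bigr)$.

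Assembling the two affine evaluations, summing over $i=1,\dots,n$, reinstating $\mathrm{NA}$, and differentiating twice in $\omega$ at $0^-$ gives \eqref{price}. I expect the main obstacle to be careful bookkeeping rather than hard analysis: one must check that the composition of the two affine transforms stays inside the region where Proposition \ref{31} applies (handled by Remark \ref{prpr}$(iii)$), and one must justify pushing the one-sided $\omega$-derivatives through the two layers of conditional expectation and through the time-ordered exponential that defines $F,G$; both follow from the local boundedness and smoothness of the moment-generating function on the admissible $\omega$-interval together with dominated convergence. The closed forms of $C,D,E$ from Proposition \ref{32} then turn \eqref{price} into a computable expression, up to the numerical solution of the linear system for $D$ noted in Remark \ref{prpr}$(ii)$.
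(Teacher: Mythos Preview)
Your proposal is correct and follows essentially the same route as the paper: write each squared log-return as $\partial_\omega^{2}\big|_{\omega=0^-}$ of its mgf, use the tower property to condition on $\mathcal{F}_{t_{i-1}}$, apply Proposition~\ref{31} with $q_1=(\omega,0,0,0)$ so that the $\omega X_{t_{i-1}}$ term cancels, and then apply the affine formula once more with $q_2$ to take the outer expectation. Your write-up is in fact more careful than the paper's, since you explicitly justify the interchange of $\partial_\omega^{2}$ with $\mathbb{E}^T$ and invoke Remark~\ref{prpr}$(iii)$ to check that $q_2$ lies in the admissible parameter region; the paper only records the latter in a remark after the proof.
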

\begin{proof}
Recall the definition of RV, we get
$$RV=\sum_{i=1}^{N}\left(\ln(S_{t_i})-\ln(S_{t_{i-1}})\right)^{2}\times NA=\sum_{i=1}^{N}\left(\ln\frac{S_{t_i}}{S_{t_{i-1}}}\right)^{2}\times \mathrm{NA}.$$
Under the $T$-forward measure, the fair delivery price can be given as follows
$$K_V=E^{T}[RV|\mathcal{F}_0]=E^{T}\left[\sum_{i=1}^{N}\left(\ln\frac{S_{t_i}}{S_{t_{i-1}}}\right)^{2}\bigg|\mathcal{F}_0\right]\times \mathrm{NA}.$$
Using the fact that $X(t)=\ln S(t)$, we get
\begin{eqnarray}\label{pricingformula}
&& E^{T}\left[\left(\ln\frac{S_{t_i}}{S_{t_{i-1}}}\right)^{2}\bigg|\mathcal{F}_0\right]\nonumber\\
&=& \left. E^{T}\left[\frac{\partial^{2}}{\partial\omega^2}\exp(\omega(X_{t_i}-X_{t_{i-1}}))\bigg|\mathcal{F}_0\right]\right|_{\omega=0^-}\nonumber\\
&=&\left. \frac{\partial^{2}}{\partial\omega^2}E^{T}\big[\exp(\omega(X_{t_i}-X_{t_{i-1}}))|\mathcal{F}_0\big]\right|_{\omega=0^-}\nonumber\\
&=&\left. \frac{\partial^{2}}{\partial\omega^2}E^{T}\big[E^{T}[\exp(\omega X_{t_i}|\mathcal{F}_{t_{i-1}}]\exp(-\omega X_{t_{i-1}}))|\mathcal{F}_0\big]\right|_{\omega=0^-}\nonumber\\
&=&\left. \frac{\partial^{2}}{\partial\omega^2}E^{T}\big[\exp(\omega X_{t_{i-1}}+C(\Delta t;q_1)V_{t_{i-1}}+D(\Delta t;q_1)r_{t_{i-1}}
 \mbox{}+E(\Delta t;q_1))\exp(-\omega X_{t_{i-1}})|\mathcal{F}_0\big]\right|_{\omega=0^-}\nonumber\\
&=&\left. \frac{\partial^{2}}{\partial\omega^2}E^{T}\big[\exp(C(\Delta t;q_1)V_{t_{i-1}}+D(\Delta t;q_1)r_{t_{i-1}}+E(\Delta t;q_1))|\mathcal{F}_0\big]\right|_{\omega=0^-}\nonumber\\
&=&\left. \frac{\partial^{2}}{\partial\omega^2}\exp(C(t_{i-1};q_2)V_{0}+D( t_{i-1};q_2)r_{0}+E(t_{i-2};q_2))\right|_{\omega=0^-},
\end{eqnarray}
where $q_1=(\omega,0,0,0)$. Thus, the fair strike price of variance swaps is the multiply of the nominal amount $\mathrm{NA}$ and the Riemann-Stieltjes integral of above formula over the sampling interval $[0,T]$.
\end{proof}

\begin{remark}
It follows from Remark \ref{prpr} $(iii)$ that $C(\Delta t;q_1)\leq 0$ and $D(\Delta t;q_1)\leq 0$. Thus,  $C(t_{i-1};q_2)$, $D( t_{i-1};q_2)$ and $E(t_{i-2};q_2)$ in (\ref{price}) can be obtained by using Theorem \ref{31} and Proposition \ref{prpr}.
\end{remark}
\begin{remark}\label{moment}
For high-order moment swaps, we also have
$$E^{T}\left[\left(\ln\frac{S_{t_i}}{S_{t_{i-1}}}\right)^{m}\bigg|\mathcal{F}_0\right]
= \left. \frac{\partial^{m}}{\partial\omega^m}\exp(C(t_{i-1};q_2)V_{0}+D( t_{i-1};q_2)r_{0}+E(t_{i-2};q_2))\right|_{\omega=0^-},$$
where $q_1=(\omega,0,0,0)$
and
$$q_2=\left(0,C(\Delta t;q_1),D(\Delta t;q_1),E(\Delta t;q_1)\right).$$
When $m=2$, the second-moment swaps becomes so called variance swaps which are based on the realized variance and provide protection against unexpected or unfavorable change in volatility. When $m=3$, the third-moment swaps become so called skewness swaps which are based on the realized skewness and provide protection against unexpected or unfavorable change in the symmetry of the distribution. When $m=4$, the fourth-moment swaps become so called kurtosis swaps which are based on the realized kurtosis and provide protection against unexpected or unfavorable change in the tail behaviour of the distribution. It is worth to mention that the method presented in this paper can be used to solve the fair strike price of high moment risk premium to hedge relational risks, such as skewness swaps and kurtosis swaps.
\end{remark}

\section{Stochastic volatility model with jumps and CIR Model: full correlation case}

In this section, we study the problem of pricing variance swaps under the  stochastic volatility model with jumps and CIR Model under full correlation case.
\subsection{Model reformulation}
Assume that the correlations involved in model \eqref{real} can be given by
$$(dW_{1}(t),dW_{2}(t))=\rho_{12}dt=\rho_{21}dt,\quad (dW_{1}(t),dW_{3}(t))=\rho_{13}dt=\rho_{31}dt,\quad dW_{2}(t),dW_{3}(t))=\rho_{23}dt=\rho_{32}dt,$$
where $-1\leq\rho_{ij} \leq 1$ for all $i,j=1,2,3$ which are constants.
Then, the stock price, the volatility process and the interest rate process at time $t$ in the risk-neutral probability measure $\mathbb{Q}$ can be rewritten as
\begin{equation}\label{corrneutral}
\left\{
\begin{aligned}
dS(t)=&r(t)S(t-)dt+\sqrt{V(t)}S(t-)dW^{\mathbb{Q}}_{1}(t)+\int_R (e^x-1)S(t-)\widetilde{N}^{\mathbb{Q}}_x(dt,dx),\\
dV(t)=&\kappa^{\mathbb{Q}}(\theta^{\mathbb{Q}}-V(t))dt+\sigma\sqrt{V(t)}\left(\rho_{12} dW^{\mathbb{Q}}_{1}(t)+\sqrt{1-\rho_{12}^2} dW^{\mathbb{Q}}_{2}(t)\right),\\
dr(t)=&\alpha^{\mathbb{Q}}(\beta^{\mathbb{Q}}-r(t))dt+\eta \sqrt{r(t)}\bigg(\rho_{13}dW^{\mathbb{Q}}_{1}(t)
+\frac{\rho_{23}-\rho_{12}\rho_{13}}{\sqrt{1-\rho_{12}^2}}dW^{\mathbb{Q}}_{2}(t)\\
&+\sqrt{1-\rho_{13}^2-\left(\frac{\rho_{23}-\rho_{12}\rho_{13}}{\sqrt{1-\rho_{12}^2}}\right)^2}dW^{\mathbb{Q}}_{3}(t)
\bigg),
\end{aligned}
\right.
\end{equation}
where $dW^{\mathbb{Q}}_{i}(t)=dW_{i}(t)+\gamma_{i}(t)dt$ is a Brownian motion under the risk-neutral measure for $i=1,2,3$,
$$
v_x^\mathbb{Q}(dx)dt=\int_R e^zv_{x,z}(dx\times dz)dt, \quad \widetilde{N}^{\mathbb{Q}}_x(dt,dx)=N_x(dt,dx)-v_x^\mathbb{Q}(dx)dt,
$$
 and
\begin{equation*}
\kappa^{\mathbb{Q}}=\kappa+\lambda_{1},\quad \theta^{\mathbb{Q}}=\frac{\kappa \theta}{\kappa+\lambda_{1}},\quad \alpha^{\mathbb{Q}}=\alpha+\lambda_{2}, \quad \beta^{\mathbb{Q}}=\frac{\alpha \beta}{\alpha+\lambda_{2}}
\end{equation*}
are risk-neutral parameters.

Similar to the discussion in Section 2, we obtain the system (\ref{corrneutral}) under the forward measure $\mathbb{Q}^{T}$:
\begin{equation}\label{corrforward}
\left\{
\begin{aligned}
dS(t)=&\left(r(t)-\rho_{13}B(t,T)\eta\sqrt{V(t)}\sqrt{r(t)}\right)S(t-)dt+\sqrt{V(t)}S(t-)dW^{*}_{1}(t)\\
&+\int_R (e^x-1)S(t-)\widetilde{N}^{*}_x(dt,dx),\\
dV(t)=&\left(\kappa^{*}(\theta^{*}-V(t))-\rho_{23}\sigma B(t,T)\eta\sqrt{V(t)}\sqrt{r(t)}\right)dt+\sigma\sqrt{V(t)}dW^{*}_{2}(t),\\
dr(t)=&\left(\alpha^{*}\beta^{*}-\left(\alpha^{*}+B(t,T)\eta^{2}\right)r(t)\right)dt+\eta \sqrt{r(t)}dW^{*}_{3}(t),
\end{aligned}
\right.
\end{equation}
where
$$
\begin{aligned}
&dW^{*}_{1}(t)=\rho_{13}B(t,T)\eta\sqrt{r(t)}dt+dW^{\mathbb{Q}}_{1}(t),\\
&\widetilde{N}^{*}_x(dt,dx)=\widetilde{N}^{\mathbb{Q}}_x(dt,dx),\quad
\kappa^{*}=\kappa^{\mathbb{Q}},\quad \theta^{*}=\theta^{\mathbb{Q}},
\quad\alpha^{*}=\alpha^{\mathbb{Q}},\quad \beta^{*}=\beta^{\mathbb{Q}},\\
&dW^{*}_{2}(t)=\rho_{23}B(t,T)\eta\sqrt{r(t)}dt+
\left(\rho_{12} dW^{\mathbb{Q}}_{1}(t)+\sqrt{1-\rho_{12}^2} dW^{\mathbb{Q}}_{2}(t)\right),
\end{aligned}
$$
and
$$ dW^{*}_{3}(t)=\left(\rho_{13}dW^{\mathbb{Q}}_{1}(t)
+\frac{\rho_{23}-\rho_{12}\rho_{13}}{\sqrt{1-\rho_{12}^2}}dW^{\mathbb{Q}}_{2}(t)
+\sqrt{1-\rho_{13}^2-(\frac{\rho_{23}-\rho_{12}\rho_{13}}{\sqrt{1-\rho_{12}^2}})^2}dW^{\mathbb{Q}}_{3}(t)
\right)+B(t,T)\eta \sqrt{r(t)}dt.$$

\subsection{Pricing formula for variance swaps}
Using Feynman-Kac theorem, similar to the proof of Theorem \ref{31},  we can show that the joint moment generating function $U(\tau,X,V,r)$ is governed by the following PIDE:
\begin{equation}\label{pde1}
\left\{
\begin{aligned}
\frac{\partial U}{\partial \tau}
=&\frac{1}{2}V\frac{\partial^{2} U}{\partial X^{2}}
+\frac{1}{2}\sigma^{2}V\frac{\partial^{2} U}{\partial V^{2}}
+\frac{1}{2}\eta^{2}r\frac{\partial^{2} U}{\partial V^{2}}+\rho_{12}\sigma V\frac{\partial^{2} U}{\partial X \partial V}\\
&+\rho_{13}\eta\sqrt{r(T-\tau)}\sqrt{V(T-\tau)}\frac{\partial^{2} U}{\partial X \partial r}\\
&+\left[r-\rho_{13}B(\tau)\eta\sqrt{r(T-\tau)}\sqrt{v(T-\tau)}-\frac{1}{2}V\right]\frac{\partial U}{\partial X}\\
&+\left[\kappa^{*}(\theta^{*}-V)-\rho_{23}\sigma B(\tau)\eta\sqrt{r(T-\tau)}\sqrt{V(T-\tau)}\right]\frac{\partial U}{\partial V}\\
&+\left[\alpha^{*}\beta^{*}-(\alpha^{*}+B(\tau)\eta^{2})r\right]\frac{\partial U}{\partial r}
+\rho_{23}\sigma\eta\sqrt{r(T-\tau)}\sqrt{V(T-\tau)}\frac{\partial^{2} U}{\partial V \partial r}\\
&+\int_{R}\left[U(t-,X+x,V,r)-U(t-,X,V,r)-(e^{x}-1)\frac{\partial U}{\partial X}\right]\nu^{*}_{x}(dx).
\end{aligned}
\right.
\end{equation}

We note that the techniques for solving $U(\tau,X,V,r)$ in Section 2 could not be applied to handle PIDE \eqref{pde1} because it  contains the non-affine term of $\sqrt{r(T-\tau)}\sqrt{V(T-\tau)}$.  Thus, we should employ the methods introduced by Grzelak et al. \cite{Grzelak} to estimate the value of $\sqrt{r(T-\tau)}\sqrt{V(T-\tau)}$. To this end, we need the following lemmas.

\begin{lemma} (\cite{Grzelak}) \label{expc1}
For a CIR-type process $V(t)$ driven by the stochastic differential equation
$$dV(t)=\kappa(\theta-V(t))dt+\sigma\sqrt{V(t)}dW(t),$$
the expectation and variance of $\sqrt{V(t)}$ are given by
$$\mathbb{E}\left(\sqrt{V(t)}\right)=\sqrt{2c(t)}e^{-\lambda(t)/2}\sum_{k=1}^{\infty}\frac{1}{k!}(\lambda(t)/2)^k\frac{\Gamma(\frac{1+d}{2}+k)}{\Gamma(\frac{d}{2}+k)},$$
and
$$\mathbb{V}ar\left(\sqrt{V(t)}\right)=c(t)(d+\lambda(t))-2c(t)e^{-\lambda(t)}\left(\sum_{k=1}^{\infty}\frac{1}{k!}(\lambda(t)/2)^k\frac{\Gamma(\frac{1+d}{2}+k)}{\Gamma(\frac{d}{2}+k)}\right)^2,$$
where
$$c(t)=\frac{1}{4\kappa}\sigma^2(1-e^{-\kappa t}), \quad d=\frac{4\kappa\theta}{\sigma^2},\quad \lambda(t)=\frac{4\kappa e^{-\kappa t}V(0)}{\sigma^2(1-e^{-\kappa t})},$$
with $\Gamma(k)$ being the gamma function defined by
$$\Gamma(k)=\int_0^\infty t^{k-1}e^{-t}dt.$$
\end{lemma}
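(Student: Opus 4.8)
The plan is to reduce everything to the classical fact that the transition law of the CIR process is a scaled noncentral chi-squared distribution. First I would recall the Feller / Cox--Ingersoll--Ross representation: conditionally on $V(0)$ one has $V(t) \stackrel{d}{=} c(t)\,Z_t$, where $Z_t$ is noncentral chi-squared with $d=4\kappa\theta/\sigma^2$ degrees of freedom and noncentrality parameter $\lambda(t)=\frac{4\kappa e^{-\kappa t}V(0)}{\sigma^2(1-e^{-\kappa t})}$, and $c(t)=\frac{\sigma^2(1-e^{-\kappa t})}{4\kappa}$. This may be quoted directly from the standard literature; a quick consistency check is that $\mathbb{E}[Z_t]=d+\lambda(t)$ reproduces the known first moment $\mathbb{E}[V(t)]=\theta(1-e^{-\kappa t})+e^{-\kappa t}V(0)=c(t)(d+\lambda(t))$ (and, if one wishes, matching $\mathbb{V}ar[Z_t]=2(d+2\lambda(t))$ recovers the known CIR variance, which fixes the normalization unambiguously).

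Next I would use the Poisson mixture representation of the noncentral chi-squared law: if $N$ is Poisson with mean $\lambda(t)/2$, then conditionally on $\{N=k\}$ the variable $Z_t$ is a \emph{central} chi-squared with $d+2k$ degrees of freedom. Together with the elementary Gamma-integral identity $\mathbb{E}\big[\sqrt{\chi^2_m}\big]=\sqrt2\,\Gamma(\tfrac{m+1}{2})/\Gamma(\tfrac m2)$ — obtained by integrating $\sqrt q$ against the $\chi^2_m$ density — conditioning on $N$ gives
$$\mathbb{E}\big[\sqrt{Z_t}\big]=\sqrt2\,e^{-\lambda(t)/2}\sum_{k\ge 0}\frac{1}{k!}\Big(\frac{\lambda(t)}{2}\Big)^k\frac{\Gamma(\frac{1+d}{2}+k)}{\Gamma(\frac d2+k)}.$$
All summands are nonnegative, so the interchange of expectation and summation is justified by Tonelli's theorem. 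Multiplying by $\sqrt{c(t)}$ and using $\sqrt{c(t)}\cdot\sqrt2=\sqrt{2c(t)}$ yields the stated formula for $\mathbb{E}(\sqrt{V(t)})$.

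For the variance I would simply write $\mathbb{V}ar(\sqrt{V(t)})=\mathbb{E}[(\sqrt{V(t)})^2]-\big(\mathbb{E}[\sqrt{V(t)}]\big)^2=\mathbb{E}[V(t)]-\big(\mathbb{E}[\sqrt{V(t)}]\big)^2$, insert $\mathbb{E}[V(t)]=c(t)(d+\lambda(t))$ from the first step, and square the expression for $\mathbb{E}[\sqrt{V(t)}]$ just obtained; the square produces the prefactor $2c(t)e^{-\lambda(t)}$ in front of the squared series, giving exactly the claimed identity.

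The computation is essentially bookkeeping and I do not expect a genuine obstacle. The only points that need care are (i) pinning down the precise normalization $(c(t),d,\lambda(t))$ in the noncentral chi-squared representation, which is cleanest to verify by matching the mean and variance of $V(t)$, and (ii) justifying the term-by-term integration, which is immediate from nonnegativity of the terms. I would also note that the natural range of the summation index is $k\ge 0$; the $k=0$ term contributes $\Gamma(\tfrac{1+d}{2})/\Gamma(\tfrac d2)$ and should be retained.
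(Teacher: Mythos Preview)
Your argument is correct and is exactly the standard derivation: the scaled noncentral $\chi^2$ representation of the CIR transition law, the Poisson-mixture decomposition, the Gamma-integral formula for half-moments of a central $\chi^2$, and then $\mathbb{V}ar(\sqrt{V(t)})=\mathbb{E}[V(t)]-(\mathbb{E}[\sqrt{V(t)}])^2$. The paper does not give its own proof of this lemma; it is quoted from Grzelak--Oosterlee--van Weeren, so there is nothing to compare your approach against beyond noting that it is the same route taken in that reference.

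One remark worth keeping from your write-up: you are right that the summation should run over $k\ge 0$, not $k\ge 1$. The $k=0$ term $\Gamma(\tfrac{1+d}{2})/\Gamma(\tfrac{d}{2})$ is nonzero and must be retained; without it the formula is visibly wrong in the limit $\lambda(t)\to 0$ (i.e.\ $V(0)\to 0$), where $\mathbb{E}[\sqrt{V(t)}]$ should be strictly positive but the series as written in the statement would vanish.
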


\begin{lemma} (\cite{Grzelak}) \label{expc2}
The expectation of $\sqrt{V(t)}$ can be approximated by following
$$\mathbb{E}\left(\sqrt{V(t)}\right)=\sqrt{c(t)(\lambda(t)-1)+c(t)d+\frac{c(t)d}{2(d+\lambda(t))}}=:\Omega_1(t),$$
where $c(t)$, $d$ and $\lambda(t)$ are given in Lemma \ref{expc1}.
\end{lemma}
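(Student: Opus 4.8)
The plan is to obtain $\Omega_1(t)$ from the exact law of $V(t)$ by a moment-matching (delta-method) argument rather than by resumming the infinite series appearing in Lemma \ref{expc1}. Recall that for the CIR-type dynamics $dV(t)=\kappa(\theta-V(t))dt+\sigma\sqrt{V(t)}dW(t)$ the normalized variable $V(t)/c(t)$ is, conditionally on $V(0)$, noncentral $\chi^{2}$ with $d$ degrees of freedom and noncentrality parameter $\lambda(t)$ --- this is precisely the role of the triple $(c(t),d,\lambda(t))$ introduced in Lemma \ref{expc1}. Consequently one has the closed-form moments
\begin{equation*}
\mathbb{E}\left(V(t)\right)=c(t)\bigl(d+\lambda(t)\bigr),\qquad \mathbb{V}ar\left(V(t)\right)=2c(t)^{2}\bigl(d+2\lambda(t)\bigr),
\end{equation*}
the first of which is consistent with the elementary CIR identity $\mathbb{E}(V(t))=\theta+(V(0)-\theta)e^{-\kappa t}$.

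Next I would combine the exact relation $\bigl(\mathbb{E}\sqrt{V(t)}\bigr)^{2}=\mathbb{E}(V(t))-\mathbb{V}ar\bigl(\sqrt{V(t)}\bigr)$ with the first-order delta-method estimate obtained by Taylor-expanding $x\mapsto\sqrt{x}$ about $x=\mathbb{E}(V(t))$, namely $\mathbb{V}ar\bigl(\sqrt{V(t)}\bigr)\approx\mathbb{V}ar(V(t))/\bigl(4\,\mathbb{E}(V(t))\bigr)$. Inserting the two moments above yields
\begin{equation*}
\bigl(\mathbb{E}\sqrt{V(t)}\bigr)^{2}\approx c(t)\bigl(d+\lambda(t)\bigr)-\frac{c(t)\bigl(d+2\lambda(t)\bigr)}{2\bigl(d+\lambda(t)\bigr)},
\end{equation*}
and the substitution $d+2\lambda(t)=2\bigl(d+\lambda(t)\bigr)-d$ rewrites the right-hand side as $c(t)\bigl(\lambda(t)-1\bigr)+c(t)d+\tfrac{c(t)d}{2(d+\lambda(t))}$. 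Taking square roots then gives exactly $\Omega_1(t)$; the radicand is nonnegative under the standing Feller condition $2\kappa\theta\ge\sigma^{2}$ (equivalently $d\ge 2$), since then $d+\lambda(t)-1\ge 1$ and both summands of the radicand are positive for every $t>0$.

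I expect the only genuinely delicate point to be the legitimacy of the delta-method step: it is a truncated Taylor expansion whose remainder involves the third central moment of $V(t)$ together with the growth of $x^{-3/2}$ near the origin, so a fully rigorous error estimate would require conditioning $V(t)$ away from $0$ or exploiting the smoothness of the noncentral $\chi^{2}$ density for $t>0$. Since Lemma \ref{expc2} is asserted only as an approximation in the sense of \cite{Grzelak}, this heuristic control suffices; as an independent sanity check one may instead evaluate the exact series of Lemma \ref{expc1} asymptotically through $\Gamma\bigl(z+\tfrac12\bigr)/\Gamma(z)=\sqrt{z}\,\bigl(1-\tfrac{1}{8z}+\cdots\bigr)$, which reproduces the same leading term and confirms that the error of $\Omega_1(t)$ is of the order of the neglected higher cumulants of $V(t)$.
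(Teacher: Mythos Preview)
The paper does not supply its own proof of this lemma: it is quoted verbatim from \cite{Grzelak}, and the very delta-method identity you use, $\mathbb{V}ar(\sqrt{V(t)})\approx \mathbb{V}ar(V(t))/(4\mathbb{E}(V(t)))$, is the same approximation the paper invokes a few lines later when deriving $\Psi(t)$ and $\widetilde{\Psi}(t)$. Your derivation is correct and coincides with the argument in the cited source: noncentral $\chi^2$ moments for $V(t)/c(t)$, the exact identity $(\mathbb{E}\sqrt{V(t)})^2=\mathbb{E}(V(t))-\mathbb{V}ar(\sqrt{V(t)})$, and the first-order Taylor estimate for the variance of $\sqrt{V(t)}$ combine to give precisely $\Omega_1(t)$ after the algebraic substitution $d+2\lambda(t)=2(d+\lambda(t))-d$.
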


In order to get a close form expression for moment generating function, we simplify $\Omega_1(t)$ in Lemma \ref{expc2}. The expectation $\mathbb{E}\left(\sqrt{V(t)}\right)$ can be further appropriated by the following form
\begin{equation}\label{omega2}
\mathbb{E}\left(\sqrt{V(t)}\right)\approx a+b e^{-ct}=:\Omega_2(t),
\end{equation}
where $a$, $b$ and $c$ are constants. The values of parameters $a$, $b$ and $c$ can be solved via an optimization problem  as follows
$$\min_{a,b,c}\parallel\Omega_1(t)-\Omega_2(t) \parallel_p,$$
where $\parallel\cdot\parallel_p$ is a $p$-norm with $p>1$.
\begin{lemma} (\cite{Grzelak})
The values of parameters $a$, $b$ and $c$ can be estimated by
$$a=\sqrt{\theta-\frac{\sigma^2}{8\kappa}},\quad b=\sqrt{V(0)-a}, \quad
c=-\ln(b^{-1}(\Omega_1(1)-a)),
$$
where $\Omega_1(t)$ is given by Lemma \ref{expc2}.
\end{lemma}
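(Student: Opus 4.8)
Rather than literally minimizing $\|\Omega_1-\Omega_2\|_p$ — which admits no tractable closed-form minimizer — the idea is to fix the three free parameters of $\Omega_2(t)=a+be^{-ct}$ by three natural matching conditions against $\Omega_1$: agreement in the stationary (large-$t$) regime, agreement at the initial time $t=0$, and agreement at the reference horizon $t=1$. Since $\Omega_2$ has exactly three degrees of freedom and the two endpoints $t=0$ and $t\to\infty$ are matched exactly, the resulting interpolant is numerically an excellent surrogate for the true $\|\cdot\|_p$-minimizer; this is the sense in which $(a,b,c)$ are ``estimated''. I would carry out the three matchings in turn.

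\textbf{The large-$t$ limit gives $a$.} Using the closed form of Lemma~\ref{expc2}, I would pass to the limit $t\to\infty$ in $\Omega_1$. By Lemma~\ref{expc1}, as $e^{-\kappa t}\to0$ one has $c(t)\to\sigma^2/(4\kappa)$ and $\lambda(t)\to0$, while $d=4\kappa\theta/\sigma^2$ is constant; substituting,
\[
\lim_{t\to\infty}\Omega_1(t)=\sqrt{\,-\tfrac{\sigma^2}{4\kappa}+\tfrac{\sigma^2}{4\kappa}\cdot\tfrac{4\kappa\theta}{\sigma^2}+\tfrac12\cdot\tfrac{\sigma^2}{4\kappa}\,}=\sqrt{\theta-\tfrac{\sigma^2}{8\kappa}}.
\]
Since $\Omega_2(t)\to a$ (with $c>0$, verified in the next step), equating limits yields $a=\sqrt{\theta-\sigma^2/(8\kappa)}$.

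\textbf{The endpoints $t=0$ and $t=1$ give $b$ and $c$.} At $t=0$ the factor $V(0)$ is deterministic, so $\mathbb E\sqrt{V(0)}=\sqrt{V(0)}$; equivalently, letting $t\downarrow0$ in Lemma~\ref{expc2} one checks that the bracket under the square root tends to $V(0)$, the dominant term being $c(t)\lambda(t)\to V(0)$ while the remaining terms vanish, so $\Omega_1(0^+)=\sqrt{V(0)}$. Imposing $\Omega_2(0)=a+b=\Omega_1(0)$ then determines $b$, namely $b=\sqrt{V(0)}-a$. Finally, enforcing $\Omega_2(1)=a+be^{-c}=\Omega_1(1)$ and solving for $c$ gives $e^{-c}=b^{-1}(\Omega_1(1)-a)$, hence $c=-\ln\!\big(b^{-1}(\Omega_1(1)-a)\big)$, as stated; I would add that, since $\Omega_1$ interpolates monotonically between $\sqrt{V(0)}$ and $\sqrt{\theta-\sigma^2/(8\kappa)}$, the quantity $b^{-1}(\Omega_1(1)-a)$ lies in $(0,1)$, so the logarithm is well defined and $c>0$.

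\textbf{Main obstacle.} The only delicate point is justifying the two limits of $\Omega_1$. Taking them directly from the series $\sum_{k\ge1}\frac1{k!}(\lambda/2)^k\,\Gamma(\tfrac{1+d}{2}+k)/\Gamma(\tfrac d2+k)$ of Lemma~\ref{expc1} requires a tail estimate (the Gamma ratio grows like $k^{1/2}$, so one invokes dominated convergence as $\lambda\to0$ and a Laplace-type estimate as $\lambda\to\infty$); working instead from the already-smoothed $\Omega_1$ of Lemma~\ref{expc2} makes both limits elementary, which is the route I would take. A secondary, softer point — needed to justify ``estimated'' rather than ``equals'' — is to argue that the three-point interpolant is genuinely close to the $\|\cdot\|_p$-minimizer; I would do this qualitatively, noting that $\Omega_1$ is monotone, relaxes exponentially to its limit, and is matched exactly at both endpoints, so the residual $\Omega_1-\Omega_2$ is uniformly small on $[0,\infty)$.
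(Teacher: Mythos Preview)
The paper does not give a proof of this lemma; it is quoted verbatim from Grzelak et~al.~\cite{Grzelak} and left unproved. Your argument is precisely the heuristic Grzelak uses: fix $a$ by the stationary limit $t\to\infty$, fix $b$ by matching at $t=0$, and fix $c$ by collocation at $t=1$. So in approach you are aligned with the source, and your limit computations for $\Omega_1(t)$ as $t\to\infty$ and $t\downarrow0$ are correct.

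One point to flag: your matching at $t=0$ gives $b=\sqrt{V(0)}-a$, whereas the lemma as printed in the paper reads $b=\sqrt{V(0)-a}$. Your formula is the one that actually follows from $\Omega_2(0)=\Omega_1(0^+)=\sqrt{V(0)}$, and it is also what appears in \cite{Grzelak}; the version in this paper is a typographical slip (dimensionally, $\sqrt{V(0)-a}$ mixes a variance with a volatility and cannot be right). So your derivation is correct and in fact corrects the stated formula rather than reproducing it --- worth noting explicitly so a reader does not think you have misproved the claim.
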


Inspired by Lemma \ref{expc2},  the expectation $\mathbb{E}\left(\sqrt{r(t)}\right)$ for a CIR-type process $r(t)$ can be given by
$$\mathbb{E}\left(\sqrt{r(t)}\right)=\sqrt{\widetilde{c}(t)(\widetilde{\lambda}(t)-1)+\widetilde{c}(t)\widetilde{d}
+\frac{\widetilde{c}(t)\widetilde{d}}{2(\widetilde{d}+\widetilde{\lambda}(t))}}=:\widetilde{\Omega}_1(t)$$
with
$$\widetilde{c}(t)=\frac{1}{4\alpha}\eta^2(1-e^{-\alpha t}), \quad \widetilde{d}=\frac{4\alpha\beta}{\eta^2},\quad \widetilde{\lambda}(t)=\frac{4\alpha e^{-\alpha t}r(0)}{\eta^2(1-e^{-\alpha t})}$$
and
\begin{equation}\label{omega22}
\mathbb{E}\left(\sqrt{r(t)}\right)\approx \widetilde{a}+\widetilde{b} e^{-\widetilde{c}t}=:\widetilde{\Omega}_2(t),
\end{equation}
where
$$\widetilde{a}=\sqrt{\beta-\frac{\eta^2}{8\alpha}},\quad \widetilde{b}=\sqrt{r(0)-\widetilde{a}}, \quad
\widetilde{c}=-\ln(\widetilde{b}^{-1}(\widetilde{\Omega}_1(1)-\widetilde{a})).
$$

Let $\widetilde{\rho}$ denote the correlation of $\sqrt{V(t)}$ and $\sqrt{r(t)}$.  Then it is obvious that the $\mathbb{E}\left(\sqrt{V(t)}\sqrt{r(t)}\right)$ can be expressed in the following form
$$\mathbb{E}\left(\sqrt{V(t)}\sqrt{r(t)}\right)=\mathbb{E}\left(\sqrt{V(t)}\right)\mathbb{E}\left(\sqrt{r(t)}\right)+\widetilde{\rho} \mathbb{V}ar\left(\sqrt{V(t)}\right)\mathbb{V}ar\left(\sqrt{r(t)}\right).$$
By computing the It\^{o} differentials of two functions $F_1 (t,V(t))=\sqrt{V(t)}$ and $F_2 (t,r(t))=\sqrt{r(t)}$, respectively, we have
$$\left(d\sqrt{V(t)},d\sqrt{r(t)}\right)=\left(\frac{1}{2}\sigma dW_2(t),\frac{1}{2}\eta dW_3(t)\right)=\frac{1}{4}\sigma\eta\rho_{23}dt,$$
which implies that $\widetilde{\rho}=\frac{1}{4}\sigma\eta\rho_{23}$. Note that
$$\mathbb{V}ar(\sqrt{V(t)})\approx \frac{\mathbb{V}ar(V(t))}{4\mathbb{E}(V(t))}, \quad \mathbb{V}ar(\sqrt{r(t)})\approx \frac{\mathbb{V}ar(r(t))}{4\mathbb{E}(r(t))}.$$
It follows from Lemmas \ref{expc1} and \ref{expc2} that
\begin{equation}\label{psi20}
\mathbb{V}ar(\sqrt{V(t)})\approx \left[c(t)-\frac{c(t)d}{2(d+\lambda(t))}\right]=:\Psi(t)
\end{equation}
and
\begin{equation}\label{psi21}
\mathbb{V}ar(\sqrt{r(t)}) \approx \left[\widetilde{c}(t)-\frac{\widetilde{c}(t)\widetilde{d}}{2(\widetilde{d}+\widetilde{\lambda}(t))}\right]=:\widetilde{\Psi}(t).
\end{equation}
This shows that
\begin{equation}\label{vr}
\mathbb{E}\left(\sqrt{V(t)}\sqrt{r(t)}\right)\approx \frac{1}{4}\sigma\eta\rho_{23}\sqrt{\Psi(t)\widetilde{\Psi}(t)}
+\Omega_2(t)\widetilde{\Omega}_2(t),
\end{equation}
where $\Psi(t)$, $\widetilde{\Psi}(t)$, $\Omega_2(t)$ and $\Omega_2(t)$ are determined by \eqref{psi20}, \eqref{psi21}, \eqref{omega2} and \eqref{omega22}, respectively.
Thus, we conclude the following proposition.
\begin{proposition}\label{p3.1}
Let $\mathcal{E}(\tau)=\mathbb{E}\left(\sqrt{V(T-\tau)}\sqrt{r(T-\tau)}\right)$. If
$$\frac{\sigma-\sqrt{\sigma^2+4{\kappa}^{*2}}}{2\sigma}<\omega\leq 0,\quad \varphi\leq 0, \quad \psi\leq 0,$$
then  the price of variance swaps under full correlation case can be given by
\begin{equation}\label{stikeprice+}
K_V=\left.\mathrm{NA}\times\sum_{i=1}^{n}\frac{\partial^{2}}{\partial\omega^2}\exp(C(t_{i-1};q_2)V_{0}+D( t_{i-1};q_2)r_{0}+E(t_{i-2};q_2))\right|_{\omega=0^-},
\end{equation}
where $q_2=\left(0,C(\Delta t;q_1),D(\Delta t;q_1),E(\Delta t;q_1)\right)$, $\mathrm{NA}$ is the nominal amount and $\omega=0^-$ represents the left derivative at $\omega=0$, $C(t_{i-1};q_2)$ and $D(t_{i-1};q_2)$ are given by Proposition \ref{32}, and
\begin{align*}
E(\tau;q) = &\chi-\frac{2\kappa^*\theta^*}{\sigma^2}\left[\xi_+\tau+2\ln\frac{(\xi_+ +\varphi\sigma^2)\exp(-\zeta \tau)+\xi_- -\varphi\sigma^2}{2\zeta}\right]+\alpha^{*}\beta^{*} \int_0^\tau G(u)F(u)^{-1}du+J\tau\\
&+\int_0^\tau \left( \rho_{13}\eta\omega\mathcal{E}(u)B(u)+\rho_{13}\eta\omega\mathcal{E}(u)D(u;q)
-\rho_{23}\sigma\eta\mathcal{E}(u)B(u)C(u;q)
+\rho_{23}\sigma\eta\mathcal{E}(u)C(u;q)D(u;q)\right)du,
\end{align*}
with
\begin{equation*}
 \begin{bmatrix}F(\tau) \\ G(\tau) \end{bmatrix}
   =\mathcal{T} \exp\begin{bmatrix}\frac{1}{2}\int_0^\tau(\alpha^*+B(t)\eta^2)dt & -\frac{1}{2}\eta\tau \\ -\omega\tau & -\frac{1}{2}\int_0^\tau(\alpha^*+B(t)\eta^2)dt \end{bmatrix}.
\end{equation*}
\end{proposition}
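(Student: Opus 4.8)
The plan is to \emph{affinize} the integro-differential equation \eqref{pde1} by freezing its single non-affine coefficient $\sqrt{V(T-\tau)}\sqrt{r(T-\tau)}$ at its (approximate) mean
\[
\mathcal{E}(\tau)=\mathbb{E}\!\left(\sqrt{V(T-\tau)}\sqrt{r(T-\tau)}\right),
\]
for which formula \eqref{vr} — assembled from Lemmas \ref{expc1}--\ref{expc2} and the exponential fits \eqref{omega2}, \eqref{omega22}, \eqref{psi20}, \eqref{psi21} — supplies a closed expression in $\tau$. Once every occurrence of $\sqrt{V(T-\tau)}\sqrt{r(T-\tau)}$ in \eqref{pde1} is replaced by the deterministic $\mathcal{E}(\tau)$, all coefficients of the PIDE become affine in $(V,r)$ and the Feynman--Kac/martingale argument of Proposition \ref{31} applies essentially verbatim: under $\frac{\sigma-\sqrt{\sigma^2+4\kappa^{*2}}}{2\sigma}<\omega\le 0$, $\varphi\le 0$, $\psi\le 0$ the process $\{U(t,X_t,V_t,r_t)\}_{0\le t\le T}$ is an $\mathcal{F}_t$-martingale — the estimates $I_1,I_2,I_3$ carry over up to a bounded multiplicative constant coming from the now-deterministic drift correction $-\omega\rho_{13}\eta\int_0^{\cdot}B(u)\mathcal{E}(u)\,du$ — so the problem is ``well-behaved'' in the sense of \cite{Duffie} and $U$ admits the affine form $U(\tau,X,V,r)=\exp(\omega X+C(\tau;q)V+D(\tau;q)r+E(\tau;q))$ with terminal data $\exp(\omega X+\varphi V+\psi r+\chi)$.

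Next I would substitute this ansatz into the affinized \eqref{pde1} and match the coefficients of $V$, of $r$ and of $1$. Because the substitution $\sqrt{V(T-\tau)}\sqrt{r(T-\tau)}\mapsto\mathcal{E}(\tau)$ strips those terms of any $V$ or $r$ factor, the contributions of the cross-derivatives $U_{Xr}$, $U_{Vr}$ and of the drift corrections $-\rho_{13}B(\tau)\eta\,\mathcal{E}(\tau)$ (in the $X$-drift) and $-\rho_{23}\sigma B(\tau)\eta\,\mathcal{E}(\tau)$ (in the $V$-drift) all fall into the constant-term equation; the equations for the $V$- and $r$-coefficients then reduce exactly to the Riccati/linear system \eqref{odes} of Proposition \ref{31} (with $\rho$ read as $\rho_{12}$). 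Hence $C(\tau;q)$ and $D(\tau;q)$ coincide with those of Proposition \ref{32}, while $E$ solves
\[
\frac{dE}{d\tau}=\kappa^{*}\theta^{*}C(\tau;q)+\alpha^{*}\beta^{*}D(\tau;q)+J+\eta\,\mathcal{E}(\tau)\Big[\rho_{13}\omega B(\tau)+\rho_{13}\omega D(\tau;q)-\rho_{23}\sigma B(\tau)C(\tau;q)+\rho_{23}\sigma C(\tau;q)D(\tau;q)\Big]
\]
with $E(0;q)=\chi$ and $J=\int_{R}[(e^{\omega x}-1)-\omega(e^{x}-1)]\nu^{*}_{x}(dx)$.

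I would then integrate this linear ODE. The first three terms integrate exactly as in Proposition \ref{32}, producing $-\frac{2\kappa^{*}\theta^{*}}{\sigma^{2}}\big[\xi_{+}\tau+2\ln(\cdots)\big]$, $\alpha^{*}\beta^{*}\int_{0}^{\tau}G(u)F(u)^{-1}du$ and $J\tau$, with $F,G$ the time-ordered exponential of Proposition \ref{32}; the remaining terms integrate to the stated correction $\int_{0}^{\tau}\eta\,\mathcal{E}(u)[\cdots]\,du$. Equipped with $U$, the pricing step is the word-for-word analogue of the proof of Proposition \ref{vcp}: write $RV=\sum_{i=1}^{n}(\ln(S_{t_i}/S_{t_{i-1}}))^{2}\times\mathrm{NA}$, use $X=\ln S$, condition on $\mathcal{F}_{t_{i-1}}$ and invoke the affine form of $U$ as in \eqref{pricingformula}, obtaining $\mathbb{E}^{T}[(\ln(S_{t_i}/S_{t_{i-1}}))^{2}\,|\,\mathcal{F}_0]=\partial_{\omega}^{2}\exp(C(t_{i-1};q_2)V_{0}+D(t_{i-1};q_2)r_{0}+E(t_{i-2};q_2))\big|_{\omega=0^{-}}$ with $q_1=(\omega,0,0,0)$ and $q_2=(0,C(\Delta t;q_1),D(\Delta t;q_1),E(\Delta t;q_1))$; summing over $i$ and multiplying by $\mathrm{NA}$ yields \eqref{stikeprice+}.

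The one genuinely delicate step — and the sole place where exactness is given up — is the affinization itself: collapsing the stochastic factor $\sqrt{V(T-\tau)}\sqrt{r(T-\tau)}$ onto the deterministic proxy $\mathcal{E}(\tau)$ of \eqref{vr} (which is itself only an approximation of $\mathbb{E}(\sqrt{V}\sqrt{r})$, built on the Grzelak et al.\ expansions of Lemmas \ref{expc1}--\ref{expc2}). Once that is accepted, the remainder of the argument mirrors the partial-correlation case term by term; the only practical difference is that $F$, $G$ — hence $E$ — are available only through a numerically evaluated time-ordered exponential rather than in fully closed form.
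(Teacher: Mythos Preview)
Your proposal is correct and follows essentially the same route as the paper: replace the non-affine factor $\sqrt{V(T-\tau)}\sqrt{r(T-\tau)}$ by its deterministic proxy $\mathcal{E}(\tau)$, verify that the resulting affine PIDE yields the same Riccati equations for $C$ and $D$ as in Proposition~\ref{31} (with $\rho\mapsto\rho_{12}$) plus an augmented linear ODE for $E$, integrate $E$ as in Proposition~\ref{32}, and then repeat the conditioning argument of Proposition~\ref{vcp} verbatim to obtain \eqref{stikeprice+}. If anything, your write-up is more explicit than the paper's about why the cross-terms $U_{Xr}$, $U_{Vr}$ and the two drift corrections land entirely in the $E$-equation after affinization, and about why the $\mathcal{F}_t$-martingale bound survives the extra deterministic drift.
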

\begin{proof}
 For $\frac{\sigma-\sqrt{\sigma^2+4{\kappa}^{*2}}}{2\sigma}<\omega\leq 0,\varphi\leq 0$ and $\psi\leq 0$, we know that the joint moment-generating function of the joint process $X(t)$, $V(t)$ and $r(t)$ in system (\ref{corrforward}) at $\tau := T-t$ is given by
$$U(\tau,X,V,r)=\exp(\omega X+C(\tau;q)V+D(\tau;q)r+E(\tau;q)),$$
where $q=(\omega,\varphi,\psi,\chi)$ and
$C(\tau;q),D(\tau;q),E(\tau;q)$ satisfy the following differential equations:
\begin{eqnarray*}\label{corrodes1}
\left\{
\begin{aligned}
\frac{d C(\tau;q)}{d \tau}=&\frac{1}{2}\sigma^{2}C^{2}(\tau;q)+(\rho\sigma\omega-\kappa^*)C(\tau;q)+\frac{1}{2}(\omega^2-\omega),\\
\frac{d D(\tau;q)}{d\tau}=&\frac{1}{2}\eta D^{2}(\tau;q)-(\alpha^{*}+B(\tau)\eta^2)D(\tau;q)+\omega,\\
\frac{d E(\tau;q)}{d\tau}=&\kappa^*\theta^{*} C(\tau;q)+\alpha^{*}\beta^{*} D(\tau;q)+\rho_{13}\eta\omega \mathcal{E}(\tau)B(\tau)+\rho_{13}\eta\omega\mathcal{E}(\tau) D(\tau;q)-\rho_{23}\sigma\eta\mathcal{E}(\tau)B(\tau)C(\tau;q)\\
&+\rho_{23}\sigma\eta\mathcal{E}(\tau)C(\tau;q)D(\tau;q)+\int_{R}\left[(e^{\omega x}-1)-\omega(e^{x}-1)\right]\nu^{*}_{x}(dx)
\end{aligned}
\right.
\end{eqnarray*}
with initial conditions
\begin{equation*}\label{corrconditions}
C(0;q)=\varphi,\quad D(0;q)=\psi,\quad E(0;q)=\chi.
\end{equation*}
Since that the differential equations for $C(\tau;q)$ and $D(\tau;q)$ are similar to the equation in Proposition \ref{31},  we only need to solve the equation for $E(\tau;q)$. By employing the methods used in Proposition \ref{31}, we have
\begin{align*}
E(\tau;q)=&\chi-\frac{2\kappa^*\theta^*}{\sigma^2}\left[\xi_+\tau+2\ln\frac{(\xi_+ +\varphi\sigma^2)\exp(-\zeta \tau)+\xi_- -\varphi\sigma^2}{2\zeta}\right]-\alpha^{*}\beta^{*} \int_0^\tau G(u)F(u)^{-1}du+J\tau\\
&+\int_0^\tau \left( \rho_{13}\eta\omega\mathcal{E}(u)B(u)+\rho_{13}\eta\omega\mathcal{E}(u)D(u;q)
-\rho_{23}\sigma\eta\mathcal{E}(u)B(u)C(u;q)
+\rho_{23}\sigma\eta\mathcal{E}(u)C(u;q)D(u;q)\right)du
\end{align*}
with
\begin{equation*}
 \begin{bmatrix}F(\tau) \\ G(\tau) \end{bmatrix}
   =\mathcal{T} \exp\begin{bmatrix}\frac{1}{2}\int_0^\tau(\alpha^*+B(t)\eta^2)dt & -\frac{1}{2}\eta\tau \\ -\omega\tau & -\frac{1}{2}\int_0^\tau(\alpha^*+B(t)\eta^2)dt \end{bmatrix}.
\end{equation*}
Similar to the proof of Theorem \ref{vcp}, we know that the price of variance swaps can be given by
\begin{equation*}\label{stikeprice}
K_V=\left.\mathrm{NA}\times\sum_{i=1}^{n}\frac{\partial^{2}}{\partial\omega^2}\exp(C(t_{i-1};q_2)V_{0}+D( t_{i-1};q_2)r_{0}+E(t_{i-2};q_2))\right|_{\omega=0^-}
\end{equation*}
and so \eqref{stikeprice+} holds.
\end{proof}

\begin{remark}
From \eqref{pricingformula} and Proposition \ref{vcp},  we know that $K_V$ can be formulated by variables $V_{t_i}$ and $r_{t_i}$.  It follows from \cite{Roslan} that variables $V_{t_i}$ and $r_{t_i}$ can be appropriated by normal random variables. Moreover, letting  $Y(t_{i-1})=C(\Delta t;q_1)V_{t_{i-1}}+D(\Delta t;q_1)r_{t_{i-1}}+E(\Delta t;q_1)$, we know that the variable $Y(t_{i-1})$ is a combination of normal variables and thus derive its characteristic function.
Thus, Proposition \ref{p3.1} shows that $K_V$ can be obtained by 
$$
K_V=NA\times \left.\frac{\partial^2}{\partial \omega^2}\mathbb{E}^T[\exp \left(Y(t_{i-1})\right)|{\mathcal{F}_0}]\right|_{\omega=0^-}.
$$
\end{remark}
\section{Numerical experiments}

We have built the variance swaps pricing framework and obtained the pricing formulas under stochastic volatility with L\'evy kernel and stochastic interest model. In order to discuss the performance of the analytical formulas for variance swaps, we present our numerical results in this section. In practice, as an example, we introduce the stochastic volatility variance gama (for short SVVG) model of L\'evy process. Our presentation follows the case of SVVG model.

\subsection{Stochastic volatility variance gama model}
The stochastic volatility variance gama  model combines the variance gamma process with a stochastic volatility process. Here the VG process introduced by Madan (see, e.g., \cite{Madan}) is a representative of infinite-activity but finite-variation jump models. The variance gamma (VG) distribution was proposed for the modelling of log return on stocks. Madan, Carr and Chang generalize this approach to non-symmetric VG distributions. The L\'evy kernel of the VG process is
$$
\nu_x(dx)=\left\{
\begin{aligned}
&\frac{1}{K}\frac{e^{-G_+x}}{x},    && \quad x>0, \\
&\frac{1}{K}\frac{e^{-G_-|x|}}{|x|}, &&  \quad x\le 0
\end{aligned}
\right.
$$
with
$$
G_{\pm}=\frac{1}{\sqrt{\frac{1}{4}v_G^2 K^2+\frac{1}{2}\sigma_G^2 K}\pm \frac{1}{2}v_G K},
$$ where $v_G$ and $\sigma_G$ are drift and volatility of Brown motion of the VG process, and $K$ is the variance rate of the gamma time change. To illustrate the VG distribution, we depict its L\'evy kernels with different model parameters and we can observe the effects of these model parameters on the difference between various kernels in Fig. \ref{VG}.

\begin{figure}[H]
  \centering
  \includegraphics[height=0.5\textwidth]{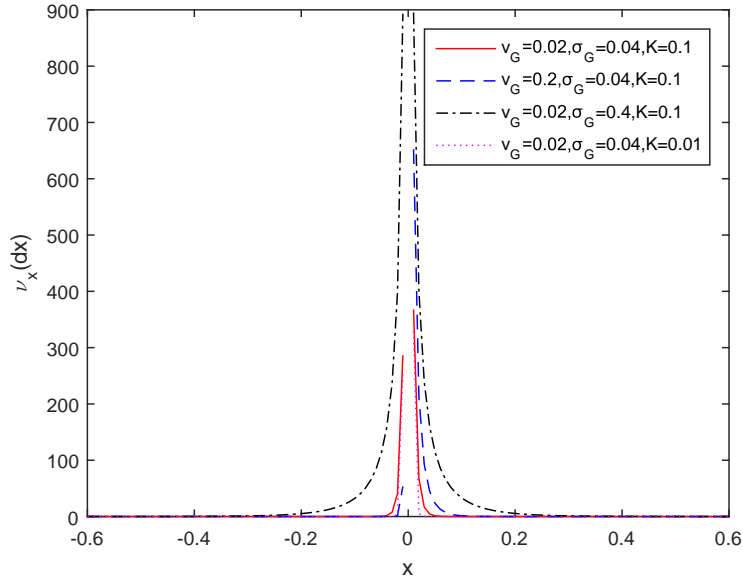}
  \caption{\label{VG}Comparison between different VG kernels with different model parameters}
  \label{referencename1}
\end{figure}

\subsection{Numerical results}

First, we show the equity premium $\phi$ of our model under the physical measure $\mathbb{P}$. In this numerical example, we set $\nu_G=0.02$, $\sigma_G=0.04$ and $K=0.01$. In addition, we take $\rho=0$ in Proposition \ref{premium} for convenience. Thus, we can obtain the form of $\phi$ as follows
\begin{equation*}\label{kerneltest}
\phi=\mu-r=\vartheta V(t)+\int_R(e^x-1)(1-e^{-\vartheta x})\nu_x(dx).
\end{equation*}
This implies that the equity premium $\phi$ is stochastic. Fig. \ref{premiumfig} describes several paths of the equity premium $\phi$. Here the parameters in $V(t)$ are taken by $V(0)=0.035$, $\kappa = 0.3$, $\theta = 0.05$ and $\sigma = 0.2$. In Fig. \ref{premiumfig}, we can find the equity premium $\phi$ is heavily dependent on the mean-reverting stochastic process $V(t)$. This result is different with \cite{Fu}, while the study of \cite{Fu} is assumed that the volatility of underlying asset is constant.
\begin{figure}[H]
  \centering
  \includegraphics[height=0.5\textwidth]{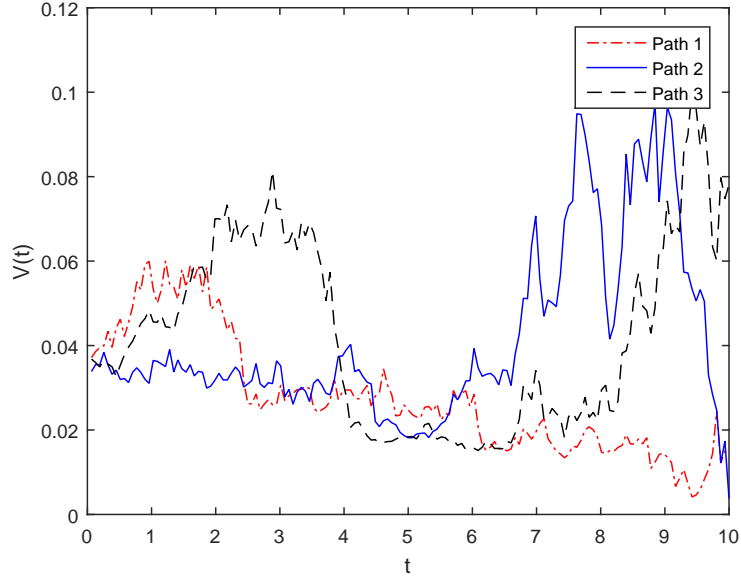}
  \caption{\label{premiumfig}Pathes of equity premium $\phi$}
  \label{referencename1}
\end{figure}
We note that the expectation of $\phi$ can be given by
\begin{equation*}\label{kerneltest}
\mathbb{E}^{\mathbb{Q}}[\phi]=\vartheta\left[e^{-\kappa t}V(0)+\theta(1-e^{-\kappa t})\right]+\int_R(e^x-1)(1-e^{-\vartheta x})\nu_x(dx).
\end{equation*}
This value does not have the stochastic term and it can be effected by the value of the risk aversion coefficient $\vartheta$. Fig. \ref{riskaverse} shows that the more risk averse $\vartheta$ is the investor, the higher is the equity premium $\phi$. This result is consistent with the one given in \cite{Ruan}, which says that the investor needs more premium when he/she is more risk-averse.

 \begin{figure}[H]
  \centering
  \includegraphics[height=0.5\textwidth]{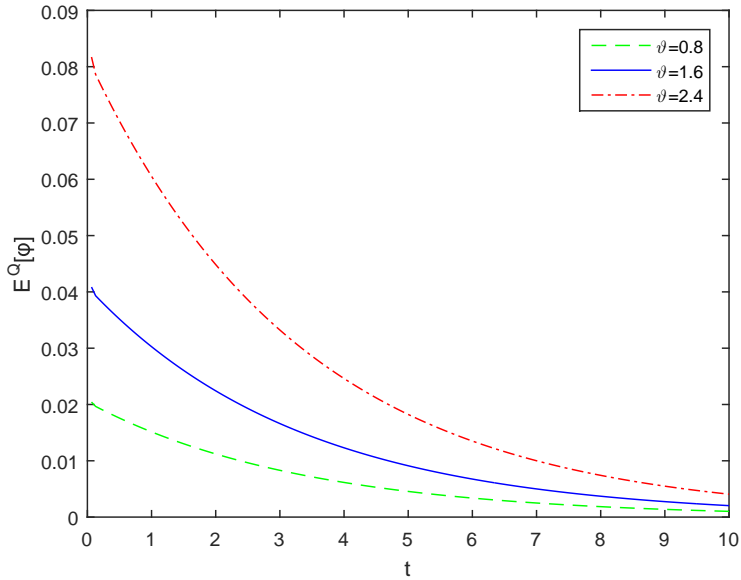}
  \caption{\label{riskaverse}The expectation of $\phi$}
  \label{referencename1}
\end{figure}

Then, we imply Monte Carlo (for short MC) simulations to obtain numerical results as references for comparisons. The new parameters of our model under the risk-neural measure $\mathbb{Q}$ are given by Table \ref{t1}. The set of parameters was also adopted by Grzelak et al. \cite{Grzelak}. The stochastic processes of the model are discretized by using the simple Euler-Milstein scheme.
\begin{table}[H]
\centering \caption{Summary parameters of the model}\label{t1}
  \centering
  \bigskip
  \begin{tabular}{c  c  c  c  c  c  c  c  c  c  c  c  c  c}
    \hline
    Parameters &   $S_{0}$ & $\rho$  &  $V_{0}$ & $\theta^{\mathbb{Q}}$   & $\kappa^\mathbb{Q}$  & $\sigma$  & $r_0$  & $\alpha^\mathbb{Q}$  & $\beta^\mathbb{Q}$ & $\eta$ &$v_G^\mathbb{Q}$ & $\sigma_G^\mathbb{Q}$ & T\\\hline
    Values   &  1 & -0.40  & $0.2236^{2}$ & $0.2236^{2}$ & 2 & 0.1 &0.05 &1.2 & 0.05 & 0.01&0.001 &0.001& 1\\
    \hline
  \end{tabular}
\end{table}

\begin{figure}[H]
  \centering
  \includegraphics[height=0.5\textwidth]{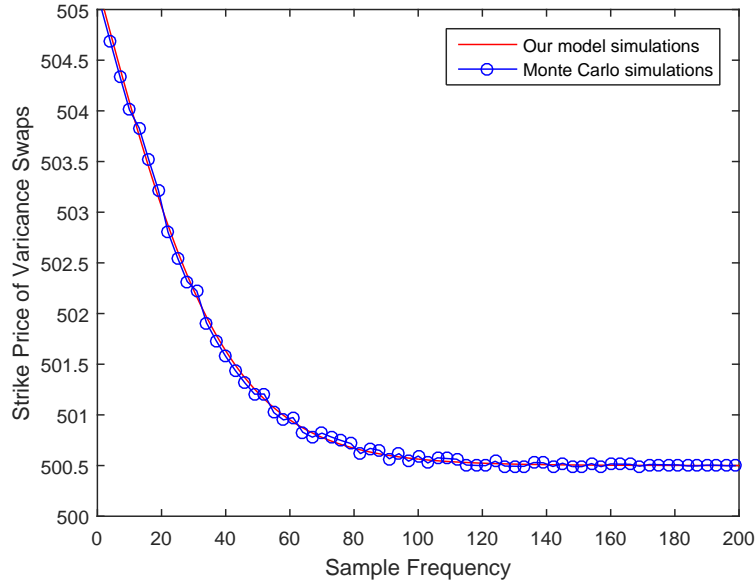}
  \caption{\label{MCMC}The comparison of our method for pricing variance swaps with MC simulations}
  \label{referencename1}
\end{figure}

Fig. \ref{MCMC} shows the comparison between the numerical results obtained from our pricing formula and
Monte Carlo simulations. Clearly, the results from our solution can match the results from the MC simulations. For example, when the number of paths reaches in MC simulations, the relative difference between the error between numerical results
obtained from our pricing formulas and MC simulations is very close already. Such a relative difference is further reduced when the number of paths increase. Also note that the convergence of the MC simulations towards our solution. This means that the result provides a verification of our solutions.

To test the effects of the stochastic interest rate, we now calculate the fair strike values of variance swaps with stochastic
interest rate and deterministic interest rate, respectively. So we implement the analytical pricing formula with the different parameters of stochastic interest rate
to get numerical values of variance swaps. For
the variance swaps with constant deterministic interest rate, we implement the formula by Zhu and Lian (see \cite{Zhu}). We can obtain
values of variance swaps with constant deterministic interest rate by setting $\alpha^\mathbb{Q}= 0$, $ \beta^\mathbb{Q}= 0$, and $ \eta= 0$. From Fig. \ref{VSS}, we notice that with the increasing of sampling frequency, the values of variance swaps are decreasing, converging to the continuous sampling counterpart. We can also observe that, when
the spot interest rate is equal to the long-term interest rate , the values of variance swaps with
stochastic interest rate coincide with the case of constant interest rate. This implies that
the parameters $\alpha^\mathbb{Q}$ and $\eta$ have little effect on the values of variance swaps.
Finally, we can see that, when $\beta^\mathbb{Q}$ is increasing, the values of variance swaps are increasing correspondingly. The implication
is that the interest rate can impact the value of a variance swap, ignoring the effect of interest rate will
result in miss-pricing. Because interest rate is by the stochastic process, working
out the analytical pricing formula for discretely-sampled variance swaps can help pricing
variance swaps more accurately.
\begin{figure}[H]
  \centering
  \includegraphics[height=0.5\textwidth]{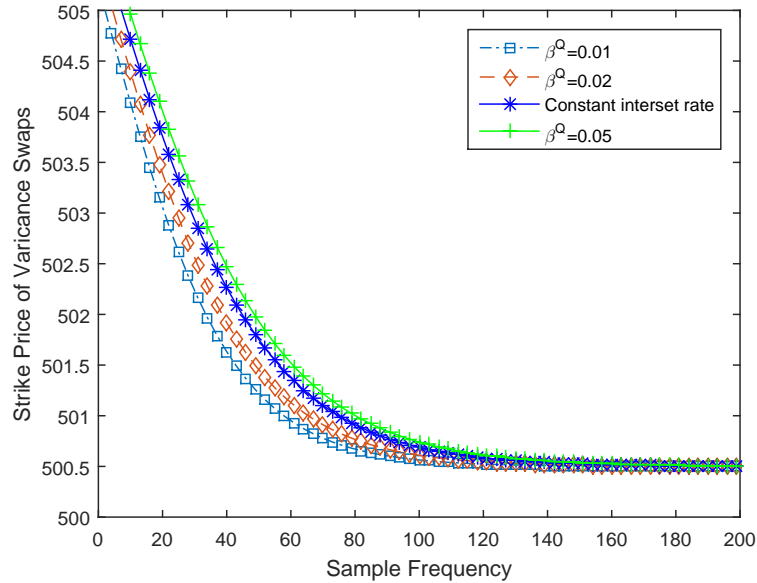}
  \caption{\label{VSS}The comparison of our method for pricing variance swaps with different $\beta^{\mathbb{Q}}$ value}
  \label{referencename2}
\end{figure}


\section{Conclusions}
In this paper,  we investigated the pricing of discrete variance swaps in the framework of the stochastic interest rate and stochastic volatility.  We resolved the governing PIDE and derived an analytical pricing formula based on the stochastic volatility model with jumps and the CIR model. By comparison to Monte-Carlo simulations,
our numerical results provided a verification for the correctness of the pricing formula presented in this paper. With the availability of the analytical pricing formula, we also discussed the impact of the interest rates on the values of variance swaps. We concluded that, although the variance swaps belong to volatility derivatives, it is unreasonable to ignore the impact of the stochastic interest rates. Additionally, the strike price of variance swap will increase with the rising of the parameters in the L\'evy kernel. More precisely, the existence of jump risks requires more costs to hedge and so the price of variance swaps would be higher.

We would like to point out that the pricing approach presented in this paper can be extended to some other models in connection to the stochastic interest rate and the stochastic volatility, such as the Hull-White interest rate model or the GARCH stochastic volatility model. We leave these problems for our future work.

\end{document}